\documentclass[12pt,english]{article}
\usepackage{mathptmx}
\usepackage[T1]{fontenc}
\usepackage{textcomp}
\usepackage[latin9]{inputenc}
\synctex=-1
\usepackage{array}
\usepackage{float}
\usepackage{url}
\usepackage{multirow}
\usepackage{amsmath}
\usepackage{amsthm}
\usepackage{amssymb}
\usepackage{graphicx}
\usepackage{geometry}
\geometry{verbose,tmargin=2cm,bmargin=2cm,lmargin=2cm,rmargin=2cm}
\usepackage{setspace}
\usepackage[authoryear]{natbib}
\PassOptionsToPackage{normalem}{ulem}
\usepackage{ulem}
\onehalfspacing

\makeatletter


\newcommand*\LyXbar{\rule[0.585ex]{1.2em}{0.25pt}}
\providecommand{\tabularnewline}{\\}

\theoremstyle{definition}
\newtheorem{defn}{\protect\definitionname}
\theoremstyle{plain}
\newtheorem{thm}{\protect\theoremname}
\theoremstyle{plain}
\newtheorem{assumption}{\protect\assumptionname}

\setlength{\parindent}{0.8cm}

\usepackage{threeparttable}
\usepackage{rotating}
\setlength{\footnotesep}{12pt}

\widowpenalty=10000
\clubpenalty=10000

\newcommand\blfootnote[1]{%
\begingroup
\renewcommand\thefootnote{}\footnote{#1}%
\addtocounter{footnote}{-1}%
\endgroup
}

\newcommand{\white}{white}

\makeatother

\usepackage{babel}
\providecommand{\assumptionname}{Assumption}
\providecommand{\definitionname}{Definition}
\providecommand{\theoremname}{Theorem}

\begin{document}
\title{\vspace{-1cm}
{\large DOWN TO THE LAST STRIKE: }\\
{\large THE EFFECT OF THE JURY LOTTERY ON CRIMINAL CONVICTIONS}}
\author{Scott Kostyshak{*} and Neel U. Sukhatme{*}{*}}
\maketitle
\begin{center}
\par\end{center}
\begin{abstract}
How much does luck matter to a criminal defendant in a jury trial?
We use rich data on jury selection to causally estimate how parties
who are randomly assigned a less favorable jury (as proxied by whether
their attorneys exhaust their peremptory strikes) fare at trial. Our
novel identification strategy is unique in that it captures variation
in juror predisposition coming from variables unobserved by the econometrician
but observed by attorneys. We find that criminal defendants who lose
the ``jury lottery'' are more likely to be convicted than their
similarly-situated counterparts, with a significant increase (\textasciitilde 19
percentage points) for Black defendants. Our results suggest that
a considerable number of cases would result in different verdicts
if retried with new (counterfactual) random draws of the jury pool,
raising concerns about the variance of justice in the criminal legal
system. JEL Codes:\ H10, K41, K14.

\noindent\blfootnote{{*}Assistant Professor, Dept.\ of Economics,
University of Florida; Visiting Professor, Dept.\ of Economics and
Business, Universitat Pompeu Fabra\textemdash scott.kostyshak@upf.edu,
+34 93 542 20 00}

\noindent\blfootnote{{*}{*}Professor, Georgetown University Law
Center; Associate Dean for Research and Academic Programs; Affiliated
Faculty, Georgetown McCourt School of Public Policy\textemdash neel.sukhatme@law.georgetown.edu,
+1 202-662-4035}

\noindent\blfootnote{ (first version: November, 2018). The authors
thank participants at the 2019 American Law and Economics Association
Annual Meeting (NYU Law School), the 2019 Conference on Empirical
Legal Studies (Claremont-McKenna College), the 2020 Armenian Economic
Association Annual Meetings, Boston University Law and Economics Workshop,
the Chicago-Kent Law Faculty Colloquium, the University of Chicago
Law and Economics Workshop, the University of Florida Department of
Economics Workshop, the Georgetown Law Summer and Spring Faculty Workshops,
the Georgetown McCourt School of Public Policy Faculty Workshop, the
University of Missouri Department of Economics Workshop, the Wake
Forest University Department of Economics Workshop, the 2024 Simposio
de la Asociación Española de Economía, and the 2025 Catalan Economic
Society Congress. The authors also thank David Abrams, George Akerlof,
Michael Bailey, Rafael Calel, Adam Chilton, Tom DeLeire, Ada Ferrer-i-Carbonell,
Fran Flanagan, Michael Frakes, Brian Galle, Jonah Gelbach, Jacob Goldin,
Anna Harvey, Todd Henderson, Randi Hjalmarsson, William Hubbard, Keith
Hylton, David Hyman, Vardges Levonyan, Mike Meurer, Julie O'Sullivan,
Anna Offit, Xavier Ramos, Kyle Rozema, Steve Salop, David Sappington,
David Schwartz, Mike Seidman, Abbe Smith, Josh Teitelbaum, and Kathy
Zeiler for helpful comments. Finally, special thanks to Alex Billy,
Sara Burriesci, Flora Feng, Jeremy McCabe, Andrea Muto, Ariel Polani,
Arturo Romero Yáñez, Takayuki Sasaki, Douglas Turner, and Jonathan
Zimmer for excellent research assistance. Any errors are the authors'
own.}
\end{abstract}

\section{Introduction}

\begin{quote}
If it is a criminal case, or even a civil one, it is not the law alone
or the facts that determine the results. Always the element of luck
and chance looms large. A jury of twelve men is watching not only
the evidence but the attitude of each lawyer, and the parties involved,
in all their moves. Every step is fraught with doubt, if not mystery.

\LyXbar Clarence Darrow, ``How to Pick A Jury,'' \emph{Esquire}
\textendash{} May 1936
\end{quote}
For many people, an ideal of the criminal justice system is that like
cases should be judged equally. That a person's guilt or innocence
should not depend on certain personal characteristics, particularly
race or gender, is central to the notion of equal protection, enshrined
in the Fourteenth Amendment to the United States Constitution. More
recently, the creation of the U.S.\ Sentencing Guidelines was spurred
in large part by perceived disparities across judges in sentencing
terms awarded to similarly-situated defendants.

Juries also play a central role in determining whether criminal defendants
are treated equally, as their role as decisionmakers in criminal trials
is guaranteed by the Sixth Amendment to the U.S.\ Constitution. A
primary mechanism for creating a fair and impartial jury is randomization\textemdash from
a set of eligible jurors, a random subset is chosen to serve as the
``jury pool'' for a particular case. And from this pool, attorneys
for both sides and the presiding judge help select who will be the
final jurors for the case (i.e., the ``jury box''). Attorneys accomplish
this in part by exercising peremptory strikes, which allow a party
to exclude from the final jury a limited number of potential jurors
in the pool.

These technical challenges have limited researchers when exploring
the importance of the jury in shaping case outcomes. As some countries
are moving away from trial by jury while others are incorporating
more of it in their judicial systems,\footnote{\url{https://www.economist.com/international/2009/02/12/the-jury-is-out}}
policy makers would benefit from additional research highlighting
the variation in outcomes due to juries.

A number of recent empirical papers have used this randomization to
measure how natural variation in the jury pool affects case outcomes.
A common approach has been to use ordinary least squares (OLS) to
regress case outcomes on statistics that capture various pool demographics.
Most notably, \citet{anwar2012impact} find that juries created from
all-\white{} jury pools are significantly more likely to convict
Black defendants than juries formed from pools that contain even one
Black person, whether or not that person is not seated on the final
jury. \citet{anwar2014role} conduct a similar analysis for age, arguing
that increasing the number of older jury pool members raises conviction
rates. More recent papers by \citet{hs2021} and \citet{doi:10.1086/698193}
similarly rely on pool variation to find, respectively, that broader
gender and racial variation in the jury pool can affect conviction
rates.

While undoubtedly pathbreaking, these prior OLS-based approaches are
prone to measurement error because they rely on randomization in the
jury pool, but only a subset of pool members are ever seated on the
final jury. Some papers, such as \citet{anwar2014role}, attempt to
correct for this by using instrumental variables (IV). Specifically,
they estimate the proportion of final seated jurors that share a certain
demographic characteristic based on the proportion of initial pool
members with that characteristic.\footnote{More recently, \citet{hs2021} exploit the randomness of the ordering
of jurors in the pool rather than just using the pool composition
as an instrument. Exploiting the ordering allows for the authors to
\textquotedblleft {[}isolate{]} the as-good-as-random variation in
the gender composition of seated juries.\textquotedblright{}} For this approach to yield a consistent estimator, however, one must
also satisfy a strong assumption: the proportion of pool members who
share a demographic characteristic must affect case outcomes \emph{only}
by influencing the proportion of final seated jurors with that characteristic.

The model of \citet{anwar2012impact} walks through a specific case
where the assumption required for IV might be violated. Suppose a
prosecutor believes Black jurors favor Black defendants more than
\white{} jurors do. The prosecutor might then use some of his limited
peremptory strikes to remove Black jurors, thereby preventing him
from using those strikes on other non-Black jurors who he perceives
favor the defendant. As such, even if the final jury box does not
include a Black juror, it will be more favorable for a Black defendant
than if the initial pool had no Black members. In this scenario, the
proportion of Black jurors in the initial jury pool correlates with
the predisposition of \emph{non-Black} seated jurors, which in turn
affects the case outcome.\footnote{\citet{anwar2012impact} themselves recognize this indirect channel
would violate the exclusion restriction of IV in this scenario, noting
that IV requires a ``strong assumption that the only channel through
which the presence of Blacks in the jury pool affects trial outcomes
is by increasing the likelihood of having Blacks on the seated jury.
If, on the other hand, any of the indirect channels are important,
the IV estimates do not have a clear interpretation . . . .''}

More fundamentally, all of the identification strategies in the prior
literature are limited to:~(a) capturing differences in explicitly
specified variables; (b) that are observable to the econometrician;
(c) when the initial jury pool is created. Primarily, this includes
demographic information, such as the racial, gender, and age composition
of the jury pool. These prior approaches ignore information revealed
to the litigants but not directly measured in the data. And they are
unable to exploit relevant information revealed after the jury pool
is created. Most notably, this includes information gleaned by attorneys
during voir dire,\footnote{Information revealed during voir dire might include, for example,
a prospective juror's occupation, whether he is married, has children,
wears glasses, dresses in a suit, acts annoyed, has a strong build,
or avoids eye contact during questioning.} the process by which attorneys question and actively interact with
potential jurors in order to identify precisely what is of interest
here\textemdash juror predisposition.

Our paper surmounts these prior limitations by introducing a new identification
strategy that uses information revealed from attorneys' use of peremptory
strikes to capture how random variation in juror predispositions affects
case outcomes. Our key insight is that peremptory strikes contain
information on how favorably a litigant views the randomly assigned
jury pool. Specifically, we use strikes as proxies for hidden information
on juror predispositions that is observable to attorneys during jury
selection but is otherwise unobservable to the econometrician. As
compared to prior identification strategies, our approach captures
how case outcomes depend on random variation in jury pools that is
otherwise unobservable to the econometrician, but is observable to
attorneys during jury selection. In addition, our identification strategy
does not rely on the exclusion restriction required by IV.

Of course, a litigant's use of peremptory strikes not only reveals
his view of the initial jury pool but shapes the final jury as well.
To account for this, we focus primarily on cases at or just below
the peremptory strike limit\textemdash the maximum number of strikes
that state law permits a party to exercise in a case. By focusing
on the strike limit, we can distinguish those cases in which a litigant
might have wanted to strike more jurors but could not do so (because
she ran out of her $n$ strikes) versus those cases in which the litigant
chose to stop just one strike short of the limit (i.e., she used only
$n-1$ strikes). The identifying assumption is that these two sets
of cases are on average the same except for underlying differences
in jury composition.

We apply our identification strategy to data from a large, racially
diverse county in Florida, for all non-capital felony and misdemeanor
jury trials between January 2015 and September 2017. Our results reveal
that random variation in jury pool composition significantly impacts
outcomes for defendants, particularly Black defendants. We find that
defendants who are assigned a jury pool for which they exhaust their
peremptory strikes fare significantly worse in jury trials than similarly-situated
defendants who use one less strike than the limit. This result is
driven by Black defendants\textemdash for them, strike exhaustion
raises the chances of conviction by approximately 19 percentage points.
As such, Black defendants in our sample are subject to more variation
in the ``jury lottery'' in terms of the type of jury they might
draw. We also provide reasons why our estimates are, if anything,
likely to be a lower bound on the effect of jury composition on conviction
rates.

Because our identification rests critically on the assumption that
strike exhaustion is uncorrelated with non-juror determinants, we
test this assumption using a rich set of observables. These include,
for example, adding controls for other strikes used by parties (e.g.,
for cause strikes, peremptory strikes by the opposing party, and a
dummy for whether the opposing party exhausted its strikes); total
counts charged; defendant demographics; prior imprisonment history;
attorney experience and education; type of offense charged; type and
strength of prosecutorial evidence in a case; fixed effects for year
of jury selection and presiding judge; and observable attributes of
the jury pool, including juror race, gender, age, estimated income,
and political affiliation. The coefficient of interest remains remarkably
stable and significant across most specifications.

We also perform a number of robustness checks to indirectly test the
assumption regarding unobservables. For example, we test among different
subsets of our population that are especially likely to be similar,
cross-check with different measures of defendant guilt, compare different
measures of strike exhaustion, and conduct placebo tests using cases
in which parties do not reach the strike limit. Our results remain
robust across these different specifications.

Our paper provides at least two new contributions to the existing
literature. First, our new identification strategy allows us to identify
how variation in juror predisposition affects case outcomes without
observing this predisposition directly. Previous approaches required
one to pre-specify what observable juror characteristics (e.g., race,
age, or gender) might affect juror predisposition; our identification
strategy captures this variation as well as differences \emph{within}
observable groups. Indeed, our approach captures all feasible information
on juror predispositions, as interpreted by attorneys and revealed
during questioning, thereby uniquely positioning us to estimate the
unconditional effect of the jury lottery on case outcome to the fullest
extent possible.\footnote{Our proxy variable does not capture juror predispositions that are
unobservable to the attorneys and are not revealed through questioning,
but this variation seems unlikely to be captured by any method.} 

Second, our paper is the first to provide causal evidence on how
peremptory strikes\textemdash and the limits placed on those strikes\textemdash affect
conviction rates.\footnote{Previous work has explored the effect of peremptory strikes on jury
composition and conviction rates from a theoretical perspective. See,
e.g., \citet{moro2024exclusion}.} This is important from a policy perspective: while all states permit
some form of peremptory strikes in criminal cases, there is significant
heterogeneity among courts in terms of the procedures they follow.
We hope our research will inform ongoing policy debates as to the
merits and demerits of these differing approaches. In particular,
our results suggest that contrary to conventional wisdom, increasing
peremptory strike limits (for both sides) could benefit defendants,
particularly Black defendants, by decreasing the variance in outcomes
for similarly-situated individuals.

The rest of the article proceeds as follows. The next section of the
paper discusses the relevant prior literature. Section~\ref{sec:Identification}
provides more details on our identification strategy. Section~\ref{sec:Background}
briefly lays out how jury selection and peremptory strikes are used
in practice and describes our data. Section~\ref{sec:Results} presents
our main results, and Section~\ref{sec:Robustness} provides numerous
robustness checks. Section~\ref{sec:Policy and Conclusion} briefly
lays out potential policy implications of our study and concludes.

\subsection{Related Literature}

An early analysis of jury selection and peremptory strikes was conducted
by \citet{zeisel1977effect}, who worked with a federal district court
in Illinois to create ``mock juries'' composed of struck and unused
jurors from jury selection conducted in 12 actual criminal trials.
The mock jurors were then presented with abridged facts for the cases
they would have adjudicated and were asked to render a verdict. Among
other things, the authors found that peremptory challenges appeared
to change the verdict for at least one case. Other studies, as summarized
in \citet{devine2001jury}, have also used mock juries to study how
jury composition affects outcomes.\footnote{\citet{baldus2001use} analyzed how peremptory strikes were used in
317 capital murder cases tried by jury in Philadelphia in the 1980s
and 1990s. \citet{diamond2009achieving} find that peremptory strike
use was related to juror race/ethnicity in 277 civil jury trials,
but that parties' use of strikes tended to cancel one another out.
A few other papers, such as \citet{flanagan2015peremptory} and \citet{ford2009modeling},
created formal models of the peremptory strike process.}

More recently, a seminal paper by \citet{anwar2012impact} uses variation
in people randomly assigned to a jury pool to measure how the racial
composition of the jury pool affects trial outcomes. Using data from
two rural Florida counties (Sarasota and Lake), they find that juries
formed from all-\white{} jury pools are 16 percentage points more
likely to convict Black defendants relative to \white{} defendants.
This effect is eliminated if the pool happens to include at least
one Black person, even if that person is not seated on the final jury.\footnote{See also \citet{alesina2014test}, who find higher reversal rates
for minority defendants who were convicted of killing \white{} victims
in Southern states, suggesting bias at the trial level. Other recent
empirical papers that explore the impact of race within the criminal
justice system include \citet{rehavi2014racial} (prosecutor charging
decisions), \citet{abrams2012judges} (likelihood of incarceration),
and \citet{arnold2018racial} (bail decisions).} 

\citet{anwar2014role} build on this work by using the same quasi-random
variation in jury pools to measure how the age of jury pool members
can affect conviction rates. The paper presents both reduced form
(direct effect of pool variation on outcomes) and IV (effect of variation
in the final seated jury on outcomes) estimates that show a strong
effect, whereby defendants randomly assigned to ``older'' pools
are much more likely to be convicted than those randomly assigned
to ``younger'' pools. They also find that prosecutors used peremptory
challenges to remove younger members in the jury pool, while defense
attorneys used such challenges to remove older members.

More recent papers apply the same identification strategy to measure
how variation in other characteristics of jury pool members affects
case outcomes. \citet{doi:10.1086/698193} applies these OLS and IV
techniques to North Carolina data, finding that the presence of more
\white{} jurors in a pool is associated with higher conviction rates,
and that prosecutors tend to strike potential Black jurors and defense
attorneys tend to strike potential \white{} jurors. \citet{hs2021},
using data from two other Florida counties (Palm Beach and Hillsborough),
exploit the random pool assignment and, in a novel contribution, the
random \emph{ordering} of jurors in the pool to construct a weighted
average of the characteristic of interest\textemdash in their case,
whether a juror has the same gender as the defendant. The weight for
each juror in the pool is the estimated unconditional probability
that a juror in that position is selected into the box. They find
that increasing the weighted average by one standard deviation (\textasciitilde 10
percentage points) reduces conviction rates on drug charges by 18
percentage points. Finally, \citet{anwar2022unequal} show that Black
defendants receive longer sentences due to over-representation of
jurors from predominantly \white{} and high-income neighborhoods.

\section{Identification Strategy\label{sec:Identification}}

In this section, we present and discuss the assumptions required for
our proxy variable (i.e., peremptory strike exhaustion) to yield a
consistent estimator  of the effect of the jury lottery on criminal
case outcomes. The first link in our identification chain recognizes
that differences in average conviction rates for defendants who use
all of their peremptory strikes ($n$ strike group) and those who
use one less strike than the limit ($n-1$ strike group) are most
closely a measure of attorney beliefs regarding the favorability of
a particular jury pool. The next link in the chain is from attorney
beliefs to reality. It seems reasonable that attorneys are best positioned
to assess the predisposition of potential jurors. Further, attorneys
have strong incentives (greater than, for example, a judge) to develop
skills to assess juror predisposition, given their role as advocates
and repeat players in courts.

Our identification strategy solves an unobserved variable problem.
Ideally, we would like to observe in some direct way whether potential
jury members are more favorable to the prosecution or the defense,
as compared to the average jury member. No such data contain this
information, and it is difficult to even imagine an ethical and feasible
method to capture such information directly. By measuring juror predispositions
through the lens of a litigant, our proxy variable picks up as much
variation as feasibly possible given the data available. 

Our basic identification strategy is driven by the subsample of cases
in the $n$ and $n-1$ categories. Without valid extrapolation, these
results might be interesting only to the extent this category represents
a considerable subpopulation. In fact, 61.6\% of our cases involved
parties who used either $n$ or $n-1$ strikes (defendants used $n$
or $n-1$ strikes in 54.7\% of cases and prosecutors used $n$ or
$n-1$ strikes in 38.4\% of cases).\footnote{The prevalence of the $n$ and $n-1$ groups is not unique to our
data. For example, tabulating North Carolina felony jury data from
2010\textendash 2012 as reported in \citet{doi:10.1086/698193}, we
can see prosecutors and defendants used either $n$ or $n-1$ strikes
in 17.4\% and 41.0\% of cases, respectively.}

The main assumption needed for identification is that there is no
difference on average between $n$ strike trials and $n-1$ strike
trials that affects case outcomes except for differences in jury composition.
In practice we just need this assumption to hold conditional on non-jury-composition
observables, since we test the sensitivity of our results in various
conditional specifications.   For example, one way in which the
assumption might be violated would be if attorney quality is correlated
with the number of strikes used; then the variation of trial outcomes
between $n$ and $n-1$ cases would pick up the effect of attorney
quality. To address this possibility, we include various controls
for attorney quality, such as practice experience, ranking of law
school attended, use of strikes in other cases, and whether the defense
attorney is a public defender.

Based on the framework discussed above, we can make an intuitive prediction:
we expect a party's win rate to be higher for cases in which $n-1$
strikes are used, as compared to cases in which $n$ strikes are used.
The magnitude of this predicted effect, and whether it impacts prosecutors
and defendants differently, likely depends on the underlying distribution
of jurors. By exploring this prediction and estimating the difference
between the two strike groups, we can estimate how random variation
in jury composition affects case outcomes.\footnote{We formalize this intuition in a model in the Appendix.}

\section{Background on Jury Selection and Data\label{sec:Background}}

\subsection{Background on jury selection}

Applying our identification strategy to the data requires a deeper
understanding of how jury selection works in Florida, which conducts
the process in a manner typical of many states. The state maintains
a list of potential jurors who may be summoned, based on individuals
who have received a drivers' license or state identification card.
Eligible jurors must be U.S. citizens, at least 18 years of age, who
are not convicted of a felony and are residents of the county in which
they are to be summoned.\footnote{Some jurors may request to be excused from participation for legitimate
reasons, such as if they have already been summoned and reported within
the last year, they are above 70 years of age, they have a medical
condition that makes them unable to serve, they are an expectant mother,
or they are not employed full-time and are a parent of a child less
than 6 years of age. \emph{See} Fl. St. § 40.013. A large number of
jurors also fail to show up for their summons and do not provide an
excuse. These jurors might be summoned again; some jurisdictions in
the United States even pursue criminal charges against such jurors
if they repeatedly do not show up when summoned, though such charges
are rare.}

If a jury trial is anticipated on a particular date, a jury administrator
will summon a certain number of potential jurors from an eligible
list of jurors for that week. The administrator then picks a random
subset of the jurors who appear for jury duty to create pools for
particular cases. These groups of jurors, ordered by a randomly assigned
number, are then sent to a specific courtroom, where jury selection
can begin.\footnote{Following \citet{anwar2012impact}, we test whether jury pools are
actually randomly assigned in our jurisdiction by regressing certain
observable pool characteristics\textemdash the proportion of female
and Black jurors, average juror age, median juror income (calculated
based on the median income in the zip code in which a juror lives),
proportion of registered Democrats, and proportion of registered Republicans\textemdash on
various defendant, attorney, and case characteristics. Our results
are shown in Appendix Table \ref{fig:Randomization check}. Out of
the 96 pair-wise comparisons in the table, there are nine that are
significant at the 5\% significance level or less. The magnitudes
of all coefficients are very small. While a joint F-test suggests
as a whole this specification might have one or more coefficients
that differ from 0, all of the estimated coefficients are very small.
Moreover, including these summary jury pool measures in our regression
specifications does not affect our results. On the whole, these results
support an inference that the jury pools were indeed randomly constructed.}

The judge then begins voir dire\textemdash the process of asking potential
jurors questions to whittle down the jury pool into the final jury.
The judge first explains some basic aspects of the case to jurors
and then asks each juror a standard series of questions that pertain
to their ability to be fair and impartial during the trial. The prosecuting
and defense attorneys can then ask follow-up questions for particular
jurors. Both sets of attorneys are present for the entire questioning.

Typically the pool then leaves the room while the attorneys propose
who should be struck from the pool for cause. These are individuals
who, based on their answers during jury selection, an attorney argues
will be unable to serve fairly and impartially at trial. The judge
may agree to strike these individuals and may excuse others at her
own discretion as well, if she finds they are unable to serve impartially
or believes jury service would impose a considerable hardship on them.
In the sample, 30.8\%, 2.7\%, and 2.3\% of all potential jurors in
jury pools were struck for cause by the judge, prosecution, or defense,
respectively. 

Next, jury selection proceeds to peremptory strikes. Unlike for cause
strikes, an attorney need not provide any reason why a juror should
be struck when exercising his peremptory strikes. The only requirement
is that parties cannot strike based on the juror's race, gender, or
national origin.\footnote{See \emph{Batson v.\ Kentucky}, 476 U.S.\ 79 (1986) (race); see
also \emph{J.E.B.\ v.\ Alabama ex rel.\ T.B.}, 511 U.S.\ 127 (1994)
(gender). If a party suspects the opposing attorney is using his peremptory
strikes in an impermissibly discriminatory manner, she may challenge
a strike as it is made. The attorney who made the peremptory strike
must then proffer a race- and gender-neutral reason for it. The party
that issued the \emph{Batson} challenge can then counter the striking
party's explanation and explain why it is pretextual. The judge immediately
decides whether to allow or disallow the peremptory strike; if she
does the latter, then the attorney does not lose that peremptory strike
and can use it to strike another juror.}

If a juror is struck either for cause or via a peremptory challenge,
the next juror in line will take his place. Both the prosecution and
the defense know who this person is before they exercise their strike,
and have the same type of information on the potential replacement
as they do on the person they might strike, since all potential jurors
are questioned in front of both parties before peremptory strikes
begin.

Both sides are provided the same number of peremptory strikes, with
the strike limit set by the type of case. In Florida, parties receive
3 peremptory strikes for misdemeanors (maximum punishment of up to
one year imprisonment), 6 strikes for most felonies (maximum punishment
of at least one year but less than life imprisonment), and 10 strikes
for cases involving charged felonies in which a defendant might receive
life imprisonment. In our data, 9.6\% and 11.8\% of all potential
jurors were struck by peremptory challenges from the prosecution or
the defense, respectively.

During the peremptory strike process, the parties start with the first
remaining jury candidate and proceed one by one through the pool.
First the prosecution decides whether to strike a candidate, then
the defense. Once both parties have exhausted their strikes or affirmatively
declined to use all of their strikes, then the final jury is set and
comprises the first 6 people in the jury pool who have not yet been
struck.\footnote{Most states use 12-person juries for most crimes. Florida differs
in that it uses 6-person juries for all crimes except capital offenses,
for which it uses 12-person juries. We exclude all capital cases in
our analysis.} Throughout this process, the jurors are typically not present and
are not informed why any particular juror was struck or by whom she
was struck.

Table \ref{tab:Voir-Dire-Strike-Sheet} illustrates how voir dire
might play out in a hypothetical non-life-eligible felony case. In
this example, jurors 1, 2, 7, 8, 13 and 20 were struck by the judge
for cause, jurors 11 and 24 were struck by the defense for cause,
and jurors 15 and 27 were struck by the prosecution for cause. The
prosecution and defense then exercised their peremptory strikes, with
the defense striking jurors 4 and 5, the prosecution striking juror
9, and so on. In the end, the defendant exhausted all 6 of his peremptory
strikes, whereas the prosecutor only used 4 of her strikes. The final
6-person jury comprises jurors 3, 6, 10, 12, 21 and 25. In our data,
22.9\% of jury pool members were neither struck nor used, and 20.0\%
ended up on the final jury.

\subsection{Summary statistics for strike groups}

Our data come from a large, racially diverse county in Florida, and
they comprise all non-capital felony and misdemeanor cases in which
jury selection was conducted from January 2015 through September 2017.
The data include detailed information on the trial participants, including
the name, race, gender, age, and address of the defendant, and the
names of the presiding judge and attorneys in the case. Using data
from the Florida Department of Corrections, we can also control for
a defendant's criminal record by measuring the number of times he
has previously been imprisoned in Florida state prison, and for how
long. Using Florida state bar records, we supplement these data with
information on the attorneys' practice experience and educational
background. We also have data on which charges were brought against
each defendant under which statutory provision. Our data further include
information on the potential jurors in each case, including their
demographic information, their juror number, and whether they were
struck for cause (by the judge, prosecution, or defense), struck under
a peremptory challenge (by the prosecution or the defense), not used
in voir dire, or seated on the final jury.

We have a total of 567 cases involving a single defendant\footnote{We exclude cases in which multiple defendants are tried jointly, since
peremptory strike limits are different in those cases. We also exclude
cases in which voir dire was conducted multiple times.} for which a jury was selected and neither party exceeded the peremptory
strike limit.\footnote{We exclude 37 cases in which one or both parties exceeded the peremptory
strike limit. As discussed below, these are likely the rare instances
in which a judge raised the limit.} Out of these cases, the parties tried the case to a jury verdict
on 511 occasions (90.1\% of the time). To the extent prosecutors and
defendants can identify whether a particular jury is good or bad for
them, we should expect jury selection to affect both cases that are
tried to a verdict and cases that settle after voir dire is conducted.
As such, in our primary specifications, we look at all case outcomes
(not just cases that proceed to jury verdicts). Nonetheless, and as
discussed below, we also test numerous specifications in which we
limit our sample to cases in which a jury actually rendered a verdict,
and we find our results remain largely similar.

We have a total of 213 misdemeanors, 273 non-life-eligible felonies,
and 81 life-eligible felonies. They span the full range of criminal
offenses commonly prosecuted in state court, including driving under
the influence, battery, burglary, robbery, theft, drug offenses, sex
offenses, and murder. The defendants are 55.38\% Black, 31.39\% \white{},
12.52\% Hispanic, and 0.35\% each Asian and Indian.

Our identification strategy is similar to regression discontinuity
in that we are using strike limits as a way to distinguish jury pools
with exogenous variation in how pro-prosecution or pro-defense they
are. Our treatment groups in most specifications are parties who exhaust
their peremptory strikes ($n$ strikes); our control groups are parties
who use one less strike than the limit ($n-1$ strikes), though we
test other control groups (e.g., parties who used $<n$ strikes),
as discussed below.

Figure \ref{fig:Peremptory-Strikes:-Defendant} shows the distribution
of cases by number of peremptory strikes used by defendants across
different offense categories and for all cases in our dataset. For
both non-life-eligible felonies and misdemeanors, the two strike categories
with the most cases are the $n$ and $n-1$ categories, with fewer
cases in categories with fewer strikes. For life-eligible felonies,
the distribution is shifted to the left and is centered on 6 strikes
(\emph{$n-4$}). Across all offense categories, the defendant exhausted
all of his strikes in 31.1\% of the cases, and used one less strike
than the limit in 20.9\% of the cases. These are the two most prevalent
outcomes.

One can see the number of peremptory strikes used drops sharply above
the limit for misdemeanors and non-life-eligible felonies, providing
suggestive evidence that this limit is a binding constraint on defendants
in a large number of cases in those offense categories.\footnote{By contrast, we see no such pattern for other, non-peremptory strikes.
For example, Figures \ref{fig:For Cause-Strikes:-Judge}, \ref{fig:For Cause-Strikes:-Defense},
and \ref{fig:For Cause-Strikes:-Prosecution-1} in the Appendix show
the distribution of judge-issued, defense, and prosecution for cause
strikes, respectively, across the three broad offense categories.
The spread of such strikes is much larger than in the peremptory strike
context, with no discernible cutoff (since there is no limit to how
many of these strikes can be granted).} The few cases (5.0\%) with more strikes than the limit are the rare
situations in which the presiding judge allowed defendants to use
more than their allotted share (something we confirmed is uncommon
through conversations with practicing attorneys in the jurisdiction).

Figure \ref{fig:Peremptory-Strikes:-Prosecution} shows an analogous
distribution of cases by number of prosecution peremptory strikes.
While the overall shape of the distribution is similar, one can see
the modal number of strikes for prosecutors falls to $n-1$ for misdemeanors,
$n-3$ for non-life-eligible felonies, and $n-6$ for life-eligible
felonies. Overall, prosecutors use fewer peremptory strikes than defense
attorneys; as such, the data suggest that peremptory strike limits
are less likely to serve as binding constraints on their choice of
jurors. Across all offense categories, prosecutors exhausted all of
their strikes in 17.1\% of the cases, and used one less strike than
the limit in 21.4\% of the cases. They exceeded the strike limit in
1.8\% of cases.

We now examine whether various baseline case-specific characteristics
vary across prosecution and defense strike classes. Table \ref{tab:Summary-Statistics}
shows mean values within strike classes for variables that capture
defendant demographics and prior criminal history; attorney experience
and education; and jury pool characteristics. Table \ref{tab:Summary-Statistics-1}
does the same for other case-specific characteristics, including which
offenses are charged and the types of evidence against a defendant
in a case. Both tables include results for two-sided t-tests for differences
in means between the $n$ and $n-1$ groups, as well as data on all
cases in our sample and on cases with $n-2$ strikes (i.e., those
in which a party used two less strikes than the limit).

We can see that broadly speaking, criminal defendants, attorneys,
and case and jury pool characteristics\footnote{One difference is that prosecutors appear to have slightly more registered
Republicans in the jury pools from which they exhaust their strikes
as compared to those pools in which they use one less strike than
the limit. This should not impact cases in which defendants use $n$
versus $n-1$ strikes.} appear comparable across different strike groups. There are no statistically
significant differences in terms of defendant age, or how likely a
defendant is to be Black, Hispanic,\footnote{Because the data reported most ethnically Hispanic defendants as \white{},
we use a standard R-package, ethnicolr, that relies on Florida voter
registration and Wikipedia data to predict race and ethnicity based
on first and last name. See https://github.com/appeler/ethnicolr.
Following \citet{arnold2018racial}, we chose a cutoff of 0.7 predicted
probability of being Hispanic in determining whether to categorize
a \white{} defendant as such. Our results do not depend substantially
on which threshold we use to determine whether a defendant is Hispanic.} or female across the $n$ and $n-1$ strike groups. Similarly, there
are no statistically significant differences across the $n$ and $n-1$
strike groups in terms of how many prison stints a defendant has previously
served, though $n$ strike defendants have served on average 0.6 years
more prior years in state prison relative to the $n-1$ strike group
(significant at the 10\% level). Defendants in both strike classes
face similar numbers of counts in the current case against them.

Regarding attorney characteristics, defense attorneys who use up their
strikes appear to have attended slightly better ranked law schools\footnote{Rankings are based on the 2018 U.S. News \& World Report Rankings,
commonly used to rank law school.} than their counterparts who had one strike remaining, and they face
slightly less experienced prosecutors. To the extent law school ranking
or attorney experience correlates with attorney quality, we might
expect this to downward bias our estimate of the impact of defendant
strike exhaustion on case outcomes. Otherwise, there are no statistically
significant differences in terms of defense or prosecutor experience,
whether the defense attorney is a public defender, or in terms of
the ranking of the prosecutor's law school. At any rate, we control
for these and other observable characteristics in our regressions
below and find our results to be largely unchanged.

Turning to Table~\ref{tab:Summary-Statistics-1}, there are no statistically
significant differences between the $n$ and $n-1$ strike classes
at the 5\% level in terms of the relative prevalence of common classes
of cases.\footnote{These are the same offense classes as used in \citet{anwar2012impact}
and subsequent papers. At the 10\% significant level, it appears that
defendants who exhaust their strikes are marginally more (less) likely
to be charged with violent offenses other than murder (property offenses),
and prosecutors who exhaust their strikes are marginally more likely
to be involved in cases involving other offenses.} Figures \ref{fig:Def-Crime-Peremptory-Strikes} and \ref{fig:Pl-Crime-Peremptory-Strikes-1}
show this graphically, comparing the relative frequency of these cases
across strike classes for both the prosecution and defense. Once again,
we see no large deviations in the relative prevalence of certain types
of cases in particular strike categories. These results are largely
confirmed by two-sided t-tests for all pairwise comparisons within
offense groups and across strike classes.\footnote{A two-sided t-test for all 42 pairwise comparisons between the strike
classes yields 41 insignificant comparisons at the 0.05 level, with
only the relative frequency of drug crimes significantly decreasing
between the $n-2$ and $n-1$ prosecution strike classes.}

Because even similarly charged cases might include significant heterogeneity,
we gathered more detailed information on each case by reviewing its
probable cause affidavit. Such affidavits list the alleged facts in
each case, as collected by law enforcement when an arrest is made
and as sworn before a court. Importantly, the affidavits reveal what
evidence the prosecution has, which helps determine the strength of
the case. We coded for whether the affidavits describe any of the
following types of prosecutorial evidence: (1) photos or video that
show the defendant committing the alleged offense; (2) incriminating
items that were recovered from the defendant's possession, dwelling,
vehicle, or place of business; (3) physical (forensic) evidence that
links the defendant to the offense and requires expert analysis and
testimony (e.g., DNA, fingerprints, tire or shoe tracks, ballistic
materials, trace fibers, recovered computer files); (4) documentary
evidence (e.g., bank, business, phone, email, or medical records linking
the defendant to the offense; breathalyzer or other intoxication test
results); (5) a confession by the defendant to the police, whether
orally or in writing; and (6) any admission by the defendant against
his/her interest, which might fall short of a full confession and
is not necessarily made as part of formal statement to police.\footnote{In addition, we coded whether the underlying offense took place in
a correctional setting (i.e., prison or jail) or not. This variable
does not appear to be correlated with guilty outcomes in our sample,
and including it as a control does not substantially affect the results.
We also attempted to code for the presence and identity of eyewitnesses
for each case; however, there was significant variation across different
coders because this information was difficult to gather from the affidavits.
At any rate, this variable does not appear correlated with guilty
outcomes in our sample, and including it as a control does not substantially
affect our results.} The presence of forensic evidence, recovered items, documentation,
a confession, or an admission are strongly positively correlated with
guilty outcomes for cases in our sample; the presence of surveillance
photos/videos was also positively correlated with guilty outcomes,
but below conventional levels of statistical significance.

Table \ref{tab:Summary-Statistics-1} shows that there are no statistically
significant differences between the $n$ and $n-1$ strike classes
at the 5\% level in terms of the relative prevalence of these different
kinds of prosecutorial evidence. This suggests there is no systematic
difference between these strike classes in terms of the quality of
evidence against the defendant.\footnote{At the 10\% significance level, forensic evidence appears five percentage
points less likely to appear in the $n$ strike cases relative to
the $n-1$ strike cases. This does not appear to impact our results,
however\textemdash when we include it as a control (or alternatively
exclude cases with any forensic evidence from our sample), our results
remain substantially the same.}

\section{Results\label{sec:Results}}

\subsection{Peremptory strikes and conviction rates}

We now explore how case outcomes differ when a litigant exhausts all
of her peremptory strikes (i.e., $n$ strikes used) as compared to
when she has exactly one strike remaining (i.e., $n-1$ strikes used).\footnote{Like the previous literature, we use ordinary least squares with heteroskedastic
robust standard errors. Our results remain largely robust if we cluster
at the defendant attorney level or judge level.} Table \ref{tab:Defendant-Pl-Strike-Exhaustion} presents our main
regression results. Following \citet{doi:10.1086/698193}, our outcome
variable is the proportion of charges on which a defendant was convicted\footnote{In 26 cases in the dataset, the court ``withheld adjudication''
on at least one count, which means the defendant was found guilty
but was technically not deemed convicted by the presiding judge because
the court believed he would be unlikely to recidivate. \emph{See}
Fl. St. § 948.01(2). Since our primary focus is on the effect of the
jury rather than the judge on outcomes, we treat these cases as guilty
outcomes, though our results remain substantially the same if we instead
exclude these cases from our sample.} that were yet to be decided as of the date of jury selection (i.e.,
were not dropped or otherwise adjudicated before that date). Our preferred
specification thus captures nuances in mixed outcome cases, such as
when a jury convicts on one count but acquits on another.\footnote{Our results remain robust and substantially the same if we treat a
defendant as guilty if he is convicted on at least one charged offense.
But such specifications might introduce substantial measurement error,
since they treat any mixed outcome case as a defendant loss (and mixed
outcome cases comprise 25.4\% of the dataset). To give a concrete
example, one case in the dataset involves a defendant who was found
not guilty by a jury of sexual battery and aggravated battery with
a deadly weapon, but found guilty of assault. He received no additional
prison term for the conviction (sentenced to time served). Characterizing
this case as a total loss for the defendant is arguably inaccurate
relative to our preferred approach, which categorizes this case with
a guilty value of 0.333 to account for the fact that the defendant
was found guilty on just one of the three charges he faced at the
time of jury selection.} 

Our coefficient of interest is a dummy variable for ``Exhausts Strikes.''
In columns (1)\textendash (3), this is a dummy for whether the defendant
used up all of his peremptory strikes (and = 0 otherwise); in columns
(4)\textendash (6), it is a dummy for whether prosecutor used up all
of her peremptory strikes. Similarly the control variable $n/n-1$
Group is a dummy for whether the defendant used either $n$ or $n-1$
peremptory strikes; in columns (4)\textendash (6), it is a dummy for
whether the prosecutor used either $n$ or $n-1$ peremptory strikes.
Because the number of strikes allowed differs across misdemeanors
and the two classes of felonies, all specifications also include controls
for whether the defendant was charged with a felony and whether it
was a life-eligible offense. With these controls, Exhausts Strikes
measures how outcomes differ in cases in which all peremptory strikes
were used as compared to ones in which one less strike was used than
the strike limit.\footnote{Our coefficient of interest on Exhausts Strikes remains largely similar
if we apply the same regressions separately to misdemeanors and the
two different classes of felonies, or if we exclude life-eligible
felonies from our analysis, though some results lose significance
as our decreased sample size increases our standard errors.}

Columns (2), (3), (5) and (6) also include dummy variables for whether
the defendant was Black, Hispanic, or female, and controls for the
defendant's age and the number of years he has previously served in
Florida prison.\footnote{Replacing or supplementing $\text{age}$ with $\text{age}{}^{2}$
or $\ln(\text{age})$, or replacing years of past imprisonment with
number of past imprisonments, does not meaningfully affect our results.} Columns (3) and (6) further include controls for the following case-
and strike-specific characteristics: number of prosecution-, defendant-,
and judge-directed for cause strikes (each tabulated separately);
number of peremptory strikes used by the opposing side; a dummy for
whether the opposing side exhausted its peremptory strikes; number
of counts charged; and a fixed effect for the year in which jury selection
occurred.

The coefficient on Exhausts Strikes is positive and significant at
the 5\% level in columns (1)\textendash (3). The coefficient values
indicate that strike exhaustion across all defendants increases the
probability of conviction by about 12 percentage points. By contrast,
the coefficient on Exhausts Strikes is insignificant for prosecutors.

As noted in the prior discussion on identification, if differences
in jury composition are the only relevant differences between cases
in which defendants use $n$ rather than $n-1$ strikes (at least
after controlling for observables), then these regression results
have a causal interpretation. In particular, a defendant who receives
a ``bad'' jury pool (i.e., one in which he is forced to exhaust
his peremptory strikes) is about 12 percentage points more likely
to be convicted of a crime than a similarly-situated defendant who
does not use up his strikes.

\subsection{Effects on Black and non-Black defendants}

The above results suggest that strike exhaustion by a defendant is
a better predictor of whether the defendant is found guilty than strike
exhaustion by the prosecutor. We now examine whether this effect on
defendants is homogenous or depends on a defendant's race.

Table \ref{tab:Defendant-Bl-NBl-Exhaustion} presents our results,
using the same specifications from columns (1)\textendash (3) in Table
\ref{tab:Defendant-Pl-Strike-Exhaustion} but instead divided by race
into Black and non-Black defendants. We find the probability of conviction
increases by 18\textendash 20 percentage points for Black defendants
who exhaust their peremptory strikes ($n$ strikes used) as compared
to Black defendants who use one less strike than the limit ($n-1$
strikes used). By contrast, the coefficients on peremptory strike
exhaustion for non-Black defendants are between 0.04 and 0.07 and
are statistically insignificant.

Our results indicate that Black defendants who exhaust their peremptory
strikes are more likely to be convicted than Black defendants who
use one less strike than the limit. Moreover, strike exhaustion increases
conviction rates more for Black defendants than for non-Black defendants
in our sample. Still, we cannot rule out that strike exhaustion has
the same effect on non-Black defendants as it does for Black defendants
when we generalize to the full population. In particular, when we
regress the conviction dummy variable on whether a defendant is Black,
whether a defendant expires his peremptory strikes, and the interaction
of these two variables (along with controls for whether the underlying
crime was a felony or a life-eligible crime), we obtain a positive
but insignificant coefficient for the interaction. This null result
might be due to our relatively large standard errors, which in turn
are likely caused by the relatively small sample size of our study.

\subsection{Within- versus between-group effects of jury composition}

As noted, a key feature of our novel identification strategy is that
it allows us to capture the full unconditional effect of the jury
lottery\textemdash that is, differences across observable groups as
well as differences within those groups. Previous approaches could
only capture how pre-specified racial, gender or age differences in
the randomly-assigned jury pool affect conviction rates. We build
on this by further capturing how variation within identifiable race-gender-age
groups (e.g., 30-year-old \white{} men) might affect conviction rates.

We can compare whether within-group variation or between-group variation
in prospective jurors has a greater effect on conviction rates in
our sample. To do this, we include controls for the composition of
the randomly-selected jury pool, before the parties exercise any strikes.
We measure the proportion of Black jurors in the jury pool,\footnote{\citet{anwar2012impact}, who conduct their study in largely \white{}
counties, contrast cases in which there were any Black jury pool candidates
versus cases in which there were none. Because our jurisdiction is
significantly more diverse, there are very few cases in which there
were no Black candidates in the jury pool (only 18 out of 567 total
cases). As such, we cannot test their specifications here.} the proportion of women in the jury pool, the average age of members
of the jury pool, the log median income of jury pool members (with
each juror's income estimated by the median income for the zip code
in which she resides, using U.S. Census data and 2017 inflation-adjusted
dollars)\footnote{Our results are substantially the same if we use median income instead
of its log. We present log median income as a control in the regressions
here rather than levels of median income because we expect the potential
income effect to be concave.}, and the proportion of registered Democrats and registered Republicans
in the jury pool prior to strikes.\footnote{We obtained the political affiliation of potential jurors by matching
jury pool data with Florida voter registration data using juror last
name, zip code, and date of birth. We matched 70.89\% of all jury
pool members (17,077 out of 24,088) in our sample in this manner.}

Table \ref{tab:Defendant-All-Bl-Jury-Composition} shows the same
specifications as in columns (1)\textendash (3) of Tables \ref{tab:Defendant-Pl-Strike-Exhaustion}
and \ref{tab:Defendant-Bl-NBl-Exhaustion}, except with the race,
gender, age, income, and political affiliation jury pool controls.
Most of these controls do not have a statistically significant impact
on case outcomes, with the exception of the proportion of registered
Republicans in the jury pool and the effect of log median juror income,
both of which are marginally statistically significant for Black defendants.
Using standardized regression coefficients (not shown), we can see
that increasing log median juror income (proportion of registered
Republicans) in a jury pool by one standard deviation increases conviction
rates by about 9.5 to 10.2 (10.7 to 11.6) percentage points for Black
defendants.

Importantly, the coefficient on defendant peremptory strike exhaustion
remains positive, statistically significant, and substantially the
same magnitude as in prior specifications. The enduring importance
of defendant strike exhaustion implies that within-group variation
among jurors plays a more substantial role in case outcomes than previously
recognized.

\subsection{Bounds discussion}

The large differences we find in conviction rates between the $n$
and $n-1$ groups are likely just a lower bound on how jury composition
affects case outcomes. This is true for at least two reasons. First,
our results would measure the maximum possible difference between
``unlucky'' and ``lucky'' defendants only if defendants who used
all of their $n$ strikes were all assigned the worst possible jury
pools, and defendants who used $n-1$ strikes were all assigned the
best possible jury pools. One or both of these events seems unlikely;
as such, the difference between the most unlucky and most lucky defendants
is likely greater than what we measure here.

Second, it is unlikely that every defendant within the $n-$strike
group ran out of peremptory strikes. Some of these defendants likely
wanted to strike $n$, and exactly $n$, potential jurors and as such,
the strike limit was not a binding constraint for them. Since we cannot
distinguish those $n-$strike defendants from other $n-$strike defendants
for whom the strike limit was binding (i.e., those who wanted to strike
$n+1$, $n+2$, or more potential jurors), our estimate of the impact
of jury composition is likely downward biased, with an actual impact
even greater than what we measure here.

\section{Robustness\label{sec:Robustness}}

The identifying assumption about differences between cases involving
$n$ and $n-1$ strikes forms the basis of our causal claim that variation
in jury composition heavily affects whether a defendant\textemdash particularly
a Black defendant\textemdash is convicted. In this section, we present
other specifications to test various potential threats to our identification.
In particular, we test whether variation in attorney quality, crimes
charged, strength of prosecutorial evidence, judges, or other case-specific
features might drive our results. We also test different outcome measures
and measures of strike exhaustion, and we conduct placebo tests.

\subsection{Attorney quality\label{subsec:first-story}}

Arguably the biggest concern might be that differences in attorneys
are driving the results\textemdash namely, bad defense attorneys might
be both more likely to use more peremptory strikes and more likely
to attain worse outcomes for their clients at trial. We can try to
control for any such differences through observable characteristics
of the attorneys. In particular, we include controls for attorney
experience (measured in years from their date of bar passage to the
date of jury selection), ranking of law school attended, and whether
or not the defense attorney is a public defender.\footnote{We code law schools that are unranked or whose rank is not published
with a rank of 175 (the midpoint ranking if all unranked law schools
were ranked), though our results are not sensitive to this choice.
Also, measuring law school quality instead based on the natural log
of law school ranking does not materially affect our results.}

Columns (1) and (5) of Table \ref{tab:Defendant-Strike-Atty} present
the results of including these covariates for all defendants and Black
defendants, respectively, with the coefficient of interest once again
whether the defendant exhausted his strikes in the category of cases
for which either $n$ or $n-1$ strikes were used. As one can see,
the regression results remain largely unchanged (or are perhaps even
slightly stronger) as compared to the coefficients in the baseline
specifications presented earlier.

Still, it is possible that years of practice experience and quality
of law school attended do not fully capture differences across attorneys,
and that some unobservable attorney characteristics are driving both
differences in the number of strikes used and the attorney's ability
to help her client. As a further test, we include counts for the number
of cases in our sample in which each attorney used $n$ strikes and
the number of cases in our sample in which they used $n-1$ strikes,
to capture their propensity to exhaust or nearly exhaust their strikes.
As a second test, we limit the sample to public defenders, who arguably
share similar unobservable qualities with one another. As a third
test, we limit the sample to public defenders and include the controls
for number of cases in which attorneys used $n$ and $n-1$ strikes.\footnote{As yet another test (not reported here), we limit our sample to cases
involving attorneys who used both $n$ strikes and $n-1$ strikes
at least once in our sample. Such attorneys have already shown they
are willing to exhaust or fall just short of the strike limit; as
such, one might expect them to be similar to one another on unobservable
dimensions that might bear on their decision to use strikes. Limiting
the sample in this manner reduces the coefficient's magnitude and
significance across all defendants, though it remains positive. However,
it does not appreciably change the magnitude or statistical significance
of the coefficient for Black defendants. In short, Black defendants
who are represented by the subset of attorneys who have used both
$n$ and $n-1$ strikes in the sample, are public defenders, and have
comparable practice experience and legal pedigree, are more likely
to be convicted of a crime as compared to their Black defendant counterparts
who use one less strike than the limit.}

The results of these tests are presented in columns (2)\textendash (4)
and (6)\textendash (8) of Table \ref{tab:Defendant-Strike-Atty} for
all defendants and Black defendants, respectively. Across all columns,
we can see that including the controls for $n$ and $n-1$ case counts
and limiting the sample to public defenders does not weaken the results\textemdash if
anything, the coefficient of interest is slightly stronger, while
maintaining its statistical significance.\footnote{We also test various specifications using prosecutor and defense attorney
fixed effects (not reported here). Our coefficients remain positive
and similar in magnitude, but the large number of fixed effects (138
different defense attorneys and 94 different prosecutors across 305
cases) greatly increases standard errors, often making the results
insignificant.}

\subsection{Differences in crimes charged}

Another concern might be that defendants who use $n$ strikes are
charged with different crimes that have different conviction rates
than those who use $n-1$ strikes, even after controlling for broad
offense groups (e.g., misdemeanors). We can control for this possibility
through fixed effects based on more specific offense categories under
which the defendant was charged.\footnote{For a defendant who faces multiple charges, the classification is
based on the lowest count remaining to be decided as of the date of
jury selection (i.e., it was not dropped or otherwise adjudicated
before that date).} Following \citet{anwar2012impact}, we use the following offense
categories: Homicide, Other Violent Offenses, Property Offenses, Drug
Offenses, Sex Offenses, Weapons Offenses, and Other Offenses.\footnote{Our results remain largely robust if we use finer-grained offense
categories instead.}

Column (1) in Table \ref{tab:Defendant-Strike-Year-Charge-Judge-FEs}
shows ordinary least squares regression results for all defendants
after adding offense class fixed effects to the baseline regression;
column (5) shows the same regression for Black defendants. Our coefficient
of interest remains substantially the same and significant, suggesting
that differences in crimes charged are not driving our results.

\subsection{Heterogeneity in prosecutorial evidence}

Even after controlling for crimes charged, one might be concerned
that there is unobserved heterogeneity associated with the strength
of the case that is correlated with the number of strikes used. This
might be problematic if, for example, prosecutors typically have stronger
evidence against defendants who use $n$ strikes as compared to those
who use $n-1$ strikes. To address this concern, we can quantify evidentiary
strength by reviewing a case's probable cause affidavit, which lists
the pertinent facts and provides detailed information on the types
of evidence that law enforcement have collected at the time of arrest.
Specifically, we can code for whether prosecutors have the following
types of evidence that tie the defendant to the alleged crime: forensics,
recovered incriminating items, surveillance photos/videos, documentation,
a confession, or any admission of guilt by the defendant (including
ones that fall short of a confession). We can also control for whether
the crime occurred in a correctional facility or not.

We include all of these variables as controls in each specification
presented in Table \ref{tab:Defendant-Strike-Year-Charge-Judge-FEs}.
While each of these types of evidence is separately positively correlated
with guilty outcomes, none of them affects the magnitude or significance
of our coefficient of interest (defendant strike exhaustion) when
included as controls in our regressions. This provides greater confidence
that unobserved differences in case or evidentiary strength are not
driving our results.

\subsection{Court/judge behavior\label{subsec:last-story}}

Yet another potential concern is that judges might treat parties who
exhaust their strikes differently from those who do not. For example,
suppose a judge is annoyed with a party who uses more peremptory strikes,
because it causes voir dire to take more time, and hence the judge
is more likely to rule against that party on motions raised in the
subsequent litigation. This seems unlikely to be a concern at the
voir dire stage, when most pre-trial motions have already been decided,
and it seems even more unlikely when comparing cases in which $n-1$
strikes were used versus $n-$strike cases, since these two sets of
cases are likely very similar in terms of resources expended by the
court during jury selection.

Nonetheless, even if such an effect exists, we can arguably control
for it by using judge fixed effects, which allow us to pick up judge-specific
factors that might affect our result. Columns (2) and (5) in Table
\ref{tab:Defendant-Strike-Year-Charge-Judge-FEs} present our results
for all defendants and Black defendants, respectively. Once again,
we can see our coefficient of interest on peremptory strike exhaustion
remains substantially the same and is statistically significant.

\subsection{Multiple effects}

It is possible that more than one of the stories discussed in Sections
\ref{subsec:first-story} through \ref{subsec:last-story} are true,
and are working together to drive the results. While our limited sample
size does not permit us to test all combinations that might affect
the results, we can include all of the controls described above in
a single regression. We present these results in columns (3) and (6)
of Table \ref{tab:Defendant-Strike-Year-Charge-Judge-FEs}, for all
defendants and Black defendants, respectively. Again the coefficient
of interest maintains the same magnitude and level of statistical
significance, providing more suggestive evidence that our identification
strategy is valid.

\subsection{Other measures of strike usage}

We can also show that our results remain robust to different measures
of peremptory strike usage. For example, an alternate specification
might compare the class of defendants who have exhausted their peremptory
strikes (i.e., used $n$ strikes) with any defendant who has used
fewer strikes than the limit (i.e., $<n$ strikes). This change to
our baseline specification involves removing $n/n-1$ Group as a control,
and is shown in columns (1) and (5) of Table \ref{tab:Defendant-Strike-Diff-Defs}
for all defendants and Black defendants, respectively. As is apparent,
the change does not materially affect our results and if anything
increases the statistical significance.

As a further test, we can limit the sample to just those cases in
which the prosecution exhausted all of its $n$ strikes, or where
it used $n-1$ strikes. While this approach greatly reduces the sample
size, it also reduces the possibility of strategic interactions between
prosecution and defense affecting our results. Columns (2) and (6)
of Table \ref{tab:Defendant-Strike-Diff-Defs} show our results when
we limit the sample to cases in which the prosecution used up all
its strikes; columns (3) and (7) show the results when the prosecution
used $n-1$ strikes. Our results generally appear to be even stronger
when we limit the sample either way\textemdash in other words, the
effect of defendant strike exhaustion on conviction rates is even
more pronounced when the prosecution has used either $n-1$ or all
$n$ of its strikes.

Finally, we might wish to limit the sample to just Black and \white{}
defendants, rather than comparing Black to non-Black defendants. This
also does not significantly change results, as shown in column (4)
of Table \ref{tab:Defendant-Strike-Diff-Defs}, primarily because
the large majority of defendants in the sample is one of these two
races.

\subsection{Settlements and different outcome measures}

If parties can identify whether a particular jury is good or bad for
them, that might influence case outcomes even when the parties decide
to settle a case after jury selection. As such, in our primary specifications,
we look at all case outcomes, not just cases that proceed to jury
verdicts. Looking at all case outcomes rather than just those that
lead to verdicts also makes sense if we are concerned that parties
might vary in their risk preferences and hence might vary in their
willingness to settle after jury selection.\footnote{In our full dataset, there is a slight positive correlation between
defendants who exhaust their strikes and those who proceed to trial,
though the result is not statistically significant.} 

Nonetheless, we can also compare our results to the subset of cases
in which a jury issued a verdict. These specifications, as shown in
Table \ref{tab:Defendant-Strike-Diff-Guilt-Defs}, suggest our results
remain relatively robust across different measures of guilt, with
our coefficient of interest largely similar in magnitude to that in
our primary specifications. This is true both when we look at all
defendants (columns (1)\textendash (5)) and when we limit our sample
to Black defendants (columns (6)\textendash (10)).

Columns (1) and (6) report a baseline similar to our primary specification,
in which our outcome variable is the proportion of charges on which
a defendant was found guilty in a jury verdict.\footnote{\citet{doi:10.1086/698193} uses a similar specification as a robustness
check.} So if jury issues a guilty verdict on counts 1 and 3 but acquits
on count 2, then the defendant would be assigned an outcome of 0.667.
Again, our results are similar to our primary specifications and significant
at at least the 10\% level.

Columns (2) and (7) next report a specification in which a jury issued
a guilty or not guilty verdict on at least one count, with cases coded
as ``guilty'' if a defendant is found guilty on at least one count.
Here, the coefficient on defendant strike exhaustion decreases slightly
in magnitude and loses statistical significance. However, coding guilt
in this manner introduces significant measurement error that might
downward bias the coefficient, since 11.9\% of cases that go to a
verdict involve mixed verdicts in which a jury convicts on one count
but acquits on another. Categorizing all of these cases as defendant
losses is arguably inaccurate. Accordingly, the remaining columns
in the table categorize these mixed verdict cases in other ways.

Columns (3) and (8) code a case as guilty or not guilty based on the
jury's verdict on the lowest count it adjudicated, since the lowest
count generally corresponds to the most severe charge a defendant
faces. So if a jury issued a not guilty verdict on count 1 but a guilty
verdict on count 2, then the case would be coded as not guilty. When
coded this way, we can see our coefficients of interest increase and
become statistically significant at the 10\% level for all defendants
and also when we limit our sample to Black defendants.\footnote{Still, focusing on the lowest charge is likely inaccurate for many
cases as well. To illustrate, in one case in the dataset, a defendant
was found not guilty by a jury on count 1 (second degree murder with
a firearm) but guilty on count 2 (felon in possession of a firearm)
and sentenced to 15 years in prison. Classifying a case like this
as a defendant ``win'' is arguably misleading.}

Columns (4) and (9) deal with mixed verdict cases by removing them
altogether from the sample. The coefficient of interest in these specifications
is similar in terms of magnitude and statistical significance to our
primary specifications.

Columns (5) and (10) categorize mixed verdict cases as guilty if they
result in a jail/prison sentence for the defendant and not guilty
otherwise. Again, we find our coefficients are statistically significant
and similar to our primary specifications.

\subsection{Asymmetric information and risk aversion}

Another concern might be that asymmetric information or differences
in risk preferences somehow drive our results here. One issue might
be if a defendant knows less about the potential replacement juror
than the current juror under consideration. In that scenario, a more
risk-averse defendant might be less likely to exercise a peremptory
strike than a less risk-averse defendant. And if a defendant's risk
aversion during jury selection is inversely related to the strength
of the prosecutor's case\textemdash which might occur if a defendant
who has a strong case feels less obliged to take risks during jury
selection\textemdash then we might expect defendants who exhaust their
peremptory strikes to have worse quality cases (and hence worse outcomes)
than those who do not exhaust them.

Although such a concern might be relevant in other settings, it likely
does not affect our results here. This is because in Florida, parties
have the same information on all potential jurors in the pool, since
all pool members are questioned at the outset by the judge and both
parties.\footnote{There are apparently relatively few cases in which a court runs out
of jury pool members, since jury administrators take into account
the type of case scheduled for trial and a judge's preferences when
deciding how many people to summon and how large the initial pool
for a case should be. On the rare occasions when this occurs, information
about potential jurors would be revealed in two separate stages of
voir dire.} 

A variant of this concern might be that a defendant only knows the
average predisposition for each juror in the pool rather than the
full distribution of potential outcomes for each juror. To illustrate,
suppose there are two otherwise identical Black defendants (defendant
1 and defendant 2), both charged with the same crime and facing the
same potential prison sentence. Defendant 1 knows he is innocent and
the evidence is on his side, so he expects there is a small chance
of being found guilty. As such, he does not need to take chances during
jury selection and can play it relatively safe. By contrast, defendant
2 knows she is guilty and the evidence is against her, so she expects
a higher chance of being found guilty. This defendant's only chance
of receiving a \textquotedbl not guilty\textquotedbl{} verdict or
a hung jury is to take a chance on the jurors.

Suppose the current juror under consideration for either defendant
is a \white{} man, and the replacement juror is a \white{} woman.
Suppose also that all \white{} men are identical in terms of their
predisposition toward Black defendants, but there are two types of
\white{} women\textemdash those who are are very likely to convict
a Black defendant and those who are very unlikely to convict a Black
defendant\textemdash and that attorneys cannot distinguish between
these two types. Even if, on average, a \white{} woman is more likely
to convict the defendant than a \white{} male, defendant 2 might
take a chance on the \white{} woman since that is his only chance
to win the case. As such, defendant 1 and defendant 2 might diverge
in terms of their striking behavior.

While such a concern might also be relevant, it is unlikely to affect
our results because there is no reason to believe a higher variance
juror (i.e., the \white{} woman in the example above) will be more
likely to appear in a replacement slot versus the current juror slot.
This is because potential jurors are randomly ordered in the pool.
So on average, risk aversion should not bias our results, even though
it might affect how a particular defendant uses his strikes. Defendant
2 will end up with a riskier juror, but there is no reason to believe
he is more likely to end up in the $n$ strike class versus the $n-1$
strike class.\footnote{Another potential concern might be that the variance of juror predispositions
increases as the quality of a defendant's case worsens. Put differently,
jurors on the whole might become riskier when defendants have a poor
quality case. If so, we might expect defendants to use more strikes
when they have worse cases, since they are facing a more extreme jury
pool. If this theory were true, we should expect to see a steady increase
in conviction rates as defendants use additional strikes, rather than
a large jump at the strike exhaustion boundary. But as discussed in
the next section, this is not what we see when we conduct placebo
tests.}

\subsection{Placebo tests}

Finally, we can conduct placebo tests to test whether there is anything
special about cases in which parties exhaust peremptory strikes. In
particular, instead of comparing the difference between $n$ and $n-1$
strike cases, we can imagine the strike boundary to be at $n-1$ instead,
and hence compare $n-1$ and $n-2$ strike cases. If strike exhaustion
really matters, as we posit it does, then we should see a much diminished
effect\footnote{One might still expect a jury pool to be more favorable to the striking
party in an $n-2$ strike case than in an $n-1$ strike case because
exercising the penultimate strike is still costly, as it removes the
possibility of striking two prospective jurors later on in voir dire.
Nonetheless, the cost of the penultimate strike should be less than
the cost of the last strike if the only difference on average between
different strike groups is differences in jury composition.} or no effect at all at the placebo boundary.

Table \ref{tab:Defendant-Placebo-Strike-Exhaustion} shows the same
specifications as in columns (1)\textendash (3) of Tables \ref{tab:Defendant-Pl-Strike-Exhaustion}
and \ref{tab:Defendant-Bl-NBl-Exhaustion}, except now defendant strike
exhaustion is defined at the placebo boundary. The coefficient on
placebo strike exhaustion is insignificant and close to zero in all
specifications, whether we look across all cases or just cases with
Black defendants. These results provide additional support that differences
in jury composition, and not just another variable correlated with
an increased use of peremptory strikes, is what drives our primary
results.

\section{Conclusion\label{sec:Policy and Conclusion}}

In this paper, we provide new evidence on how random variation in\emph{
}jury composition affects criminal case outcomes. Our approach differs
from previous related empirical work, which focused only on the effect
of individual variables observable at the time the initial jury pool
was created.

The central insight is that the selection process from pool to box
can be exploited to extract exogenous information on the jury pool
biases. Specifically, strike exhaustion act as signals of how litigants
view the jury pool they have been randomly assigned. All else being
equal, we expect a litigant who can accurately assess whether a juror
favors her side would use more strikes when she has been assigned
an unfavorable pool than when she has been assigned a favorable one.
Thus, litigants' use of strikes enables us to identify juror predisposition,
which is otherwise unmeasurable. The identifying assumption in the
most conservative specification is that cases in which litigants use
all of their peremptory strikes ($n$ strike group) are on average
identical to cases in which litigants use one less strike than the
limit ($n-1$ strike group), except for differences in the jury pool
they have been assigned.

Using recent data from a large, racially diverse county in Florida,
we find our $n$ and $n-1$ strike groups appear largely similar on
a broad range of observable pre-trial characteristics, including defendant
demographics and past imprisonment history, attorney experience and
education, charged offenses, and type of prosecutorial evidence. We
find defendants who use all $n$ of their peremptory strikes are significantly
more likely to be convicted than defendants who use one less strike
than the limit. This result is driven by Black defendants, for whom
strike exhaustion increases conviction rates by 18\textendash 20 percentage
points. We explain why these results are likely a lower bound on the
effect of jury composition on case outcomes. We also run multiple
robustness checks with a battery of covariates and across various
subsamples; our coefficient of interest remains remarkably stable
and significant across nearly every specification.

A primary contribution of this paper is a new framework for identification,
which enables to show that jury composition greatly affects case outcomes,
particularly for Black defendants. In addition, we believe our paper
makes a number of other contributions.

First, our results are the first empirical evidence of the causal
impact of peremptory strikes on case outcomes. Scholars have questioned
for decades whether peremptory strikes affect how a case turns out,
or whether attorneys can correctly identify during the strike process
which jurors are likely to be predisposed toward their side.\footnote{See, e.g., \citet{ford2009modeling}, \citet{zeisel1977effect}.}
Our research suggests that peremptory strikes (and the limits placed
on those strikes) do in fact affect case outcomes, and that attorneys
are at least somewhat effective in using those strikes to shape the
final jury.\footnote{Even though the vast majority of criminal cases result in plea bargains
that occur before jury selection, a long literature discusses how
this bargaining process might occur in the shadow of the expected
results of the jury trial. Compare \citet{mnookin1978bargaining},
\citet{easterbrook1983criminal}, \citet{offit2018prosecuting} with
\citet{bibas2004plea}. If this is true, the phenomenon we identify
here might have a significant impact on pretrial negotiations, at
least in cases the parties were considering taking to trial. For example,
if Black defendants are risk-averse and they face more risk in the
jury lottery than non-Black defendants, they might be more likely
to settle cases before trial and on worse terms (because of weaker
bargaining power) than they would in the absence of this heightened
risk.} Our results also suggest increasing peremptory strike limits for
defendants would be one way to decrease the variance in outcomes for
similarly-situated Black defendants.\footnote{The federal court system and a number of state courts already give
defendants more peremptory strikes than prosecutors. See, e.g., Fed.
R. Crim. P. 24(b)(2) (defense and prosecution get 10 and 6 strikes,
respectively, in non-capital felonies); Maryland Rule 4-313(a)(2)
(defense and prosecution get 20 and 10 strikes, respectively, for
life-eligible crimes); Minn. Crim. P. R. 26.02(9) (defense and prosecution
get 15 and 9 strikes, respectively, for life-eligible crimes); New
Mexico Crim. P. R. 5-606(D)(b) (defense and prosecution get 12 and
8 strikes, respectively, for non-capital, life-eligible crimes).}

To be sure, our paper cannot settle larger debates among scholars
whether peremptory challenges are socially beneficial or harmful,\footnote{See, e.g., \citet{flanagan2015peremptory}, \citet{babcock1974voir},
\citet{howard2010taking}.} or whether they make it more likely that guilty people are convicted
and innocent people are acquitted in jury trials. Answering the latter
question in particular would require, among other things, knowing
whether defendants actually committed the underlying crimes for which
they are charged, which we cannot determine. Regardless, the wide
variation in outcomes we measure, driven by pure chance and concentrated
among Black defendants, raises concerns. This problem is magnified
given that a large percentage of defendants in our data exhausted
their strikes and thus might be affected by the phenomenon we identify
here.

Relatedly, the approach we describe here might be a useful diagnostic
tool to determine whether jury selection rules in a particular jurisdiction
are problematic. Put differently, if parties are awarded a sufficient
number of peremptory strikes, either the proportion of cases with
$n$ and $n-1$ strikes should be small, or we should not find any
statistically significant difference in conviction rates between those
groups. The presence of large differences might be evidence that random
variation in jury composition is having an outsized effect on criminal
case outcomes in that jurisdiction.

Finally, our paper raises deeper policy questions about how jury selection
should be conducted so that the variance (across cases) of predispositions
in the final jury is small while preserving the random assignment
of jury pools. Several factors that affect jury composition need to
be further isolated and examined. For example, should attorneys know
who the next potential juror is when they decide whether to use a
strike (as is the case in Florida but not in some other states)? Should
attorneys be allowed to view the jury pool when making strike decisions,
so as to see the age, race, and sex of potential jurors, even though
the latter two are protected classes on which attorneys are not permitted
to discriminate? How should the number of strikes be chosen and how
should it relate to the final jury size? Should the usage of Batson
challenges be extended to take into account the juror \emph{replacing
}the one being struck? Our paper motivates why answers to these questions
could be important for the design of a fair voir dire process. As
such, we hope our work sets the stage for further theoretical and
empirical research on jury selection.

We have explored and identified an important determinant of justice.
Fairness is multidimensional. Not only should the judicial system
be designed for impartiality \emph{on average }(across cases), a fair
system should have low variance so that defendants can be confident
their case is likely to receive a similar outcome to cases that are
similar. By exposing this dimension of the justice system, we hope
to spur future research in law and in mechanism design to achieve
an optimized level of justice.

\newpage{}

\section*{Figures}

\addcontentsline{toc}{section}{Figures}

\begin{figure}
[ph!]
\caption{Peremptory Strikes: Defense\label{fig:Peremptory-Strikes:-Defendant}}

\noindent\noindent\resizebox{.99\textwidth}{!}{%

\includegraphics{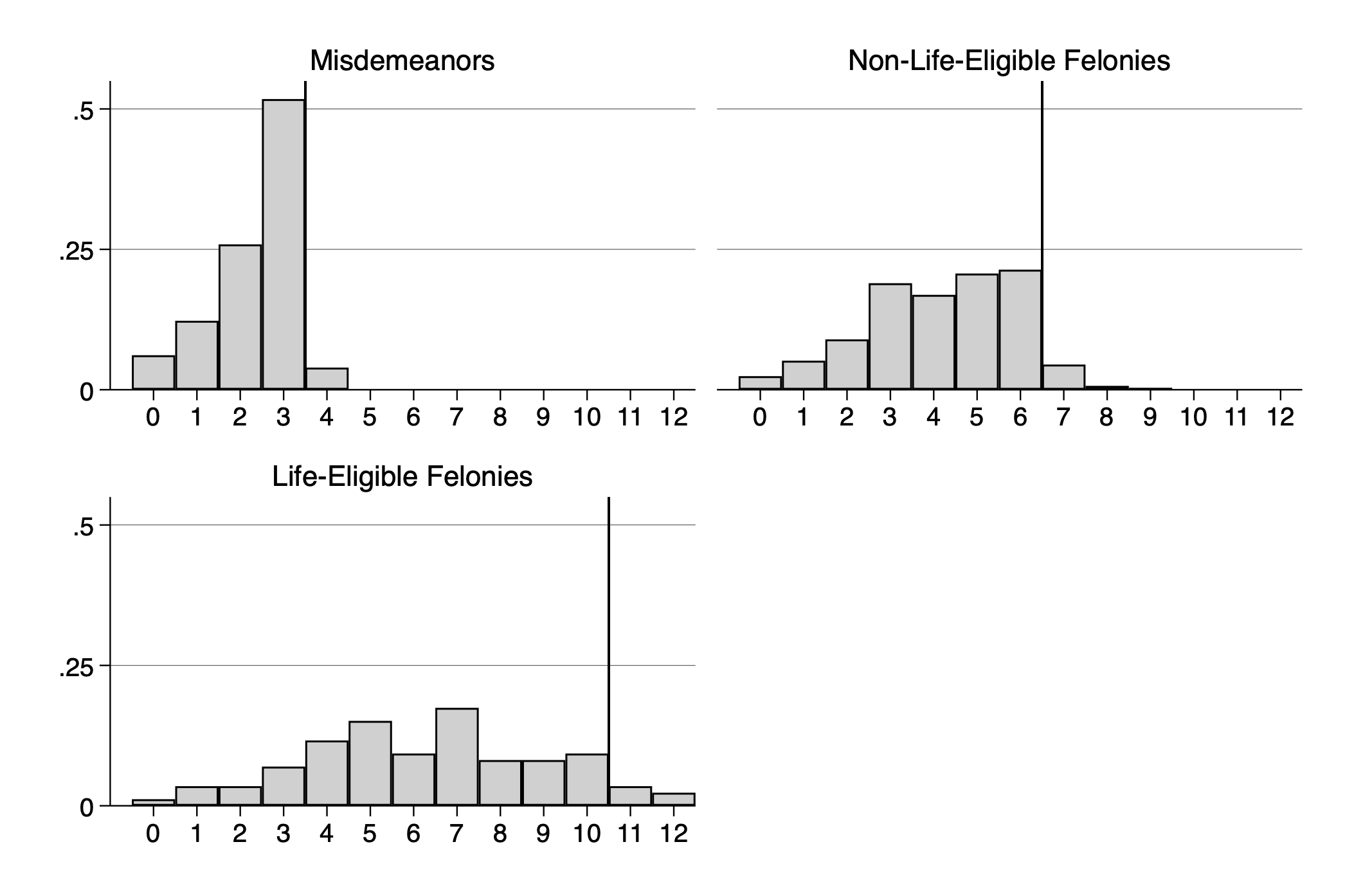}

}

\begin{threeparttable}
\begin{tablenotes} 
\item\emph{Notes}: This figure shows histograms for the number of peremptory strikes used by the defense in the 604 trials from January 2015 through September 2017 in our full dataset (includes cases in which parties exceeded their peremptory strike limits). The peremptory strike limits for the three offense classes of misdemeanor, non-life-eligible felony, and life-eligible felony are 3, 6, and 10, respectively, as shown by the vertical lines.
\end{tablenotes} 
\end{threeparttable}
\end{figure}

\begin{figure}
[ph!]
\caption{Peremptory Strikes: Prosecution\label{fig:Peremptory-Strikes:-Prosecution}}

\noindent\noindent\resizebox{.99\textwidth}{!}{%

\includegraphics{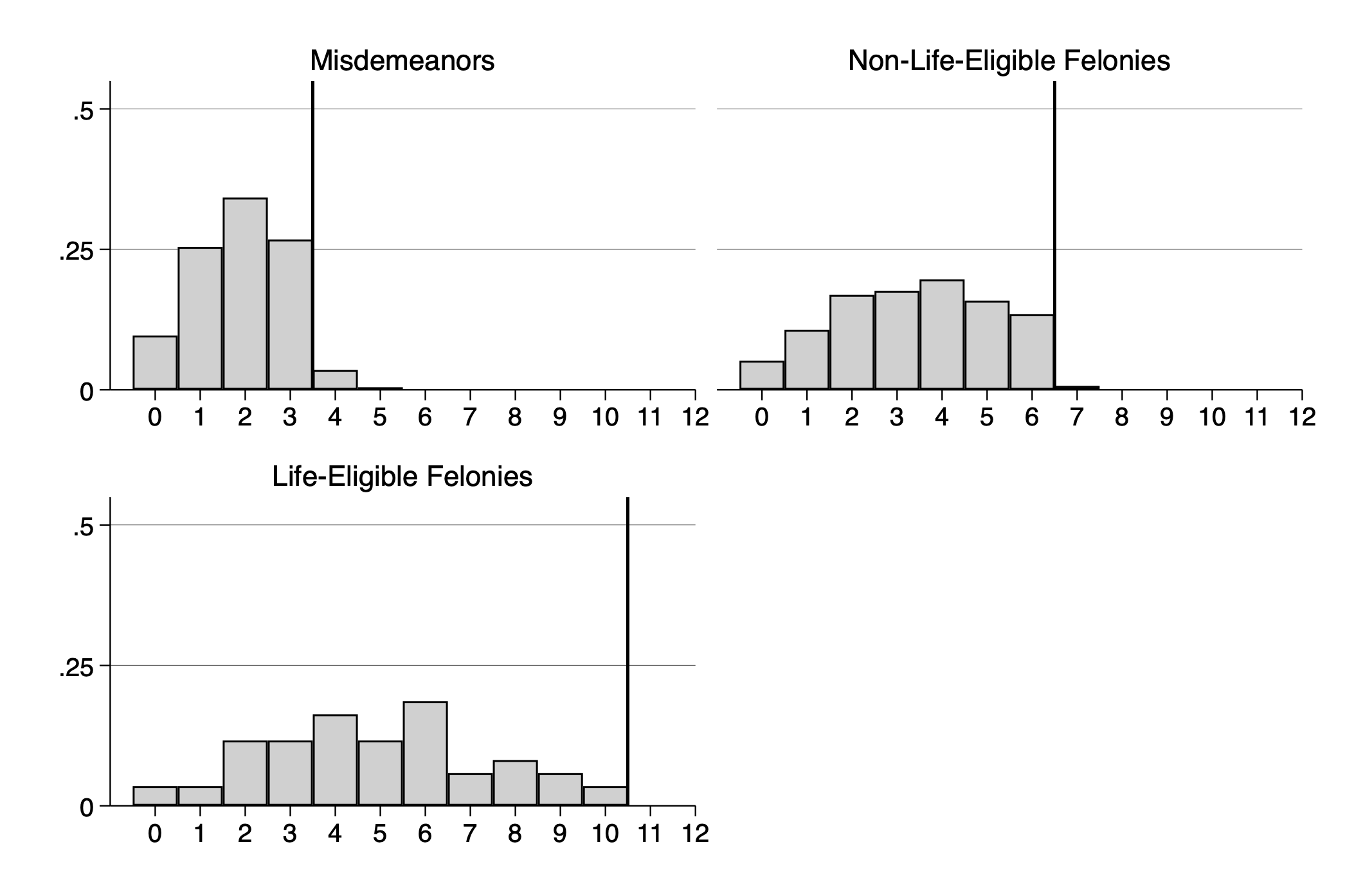}

}

\begin{threeparttable}
\begin{tablenotes} 
\item\emph{Notes}: This figure shows histograms for the number of peremptory strikes used by the prosecution in the 604 trials from January 2015 through September 2017 in our full dataset (includes cases in which parties exceeded their peremptory strike limits). The peremptory strike limits for the three offense classes of misdemeanor, non-life-eligible felony, and life-eligible felony are 3, 6, and 10, respectively, as shown by the vertical lines.
\end{tablenotes} 
\end{threeparttable}
\end{figure}

\begin{figure}
[ph!]
\caption{Proportion of Crimes Across Peremptory Strike Classes: Defense\label{fig:Def-Crime-Peremptory-Strikes}}

\noindent\noindent\resizebox{.99\textwidth}{!}{%

\includegraphics{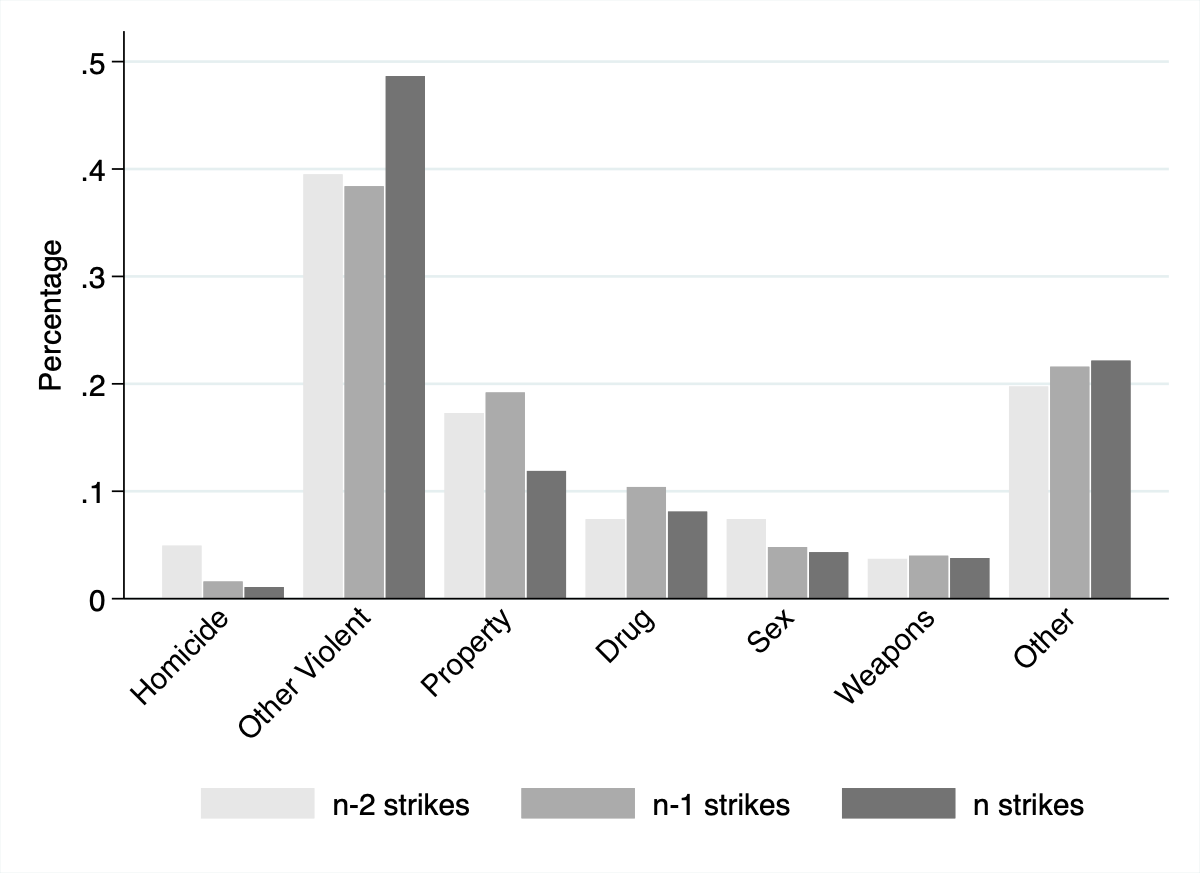}

}

\begin{threeparttable}
\begin{tablenotes} 
\item\emph{Notes}: This figure shows the percentage of cases in each of seven different offense categories (as used in Anwar et al. 2012), across three different peremptory strike groups for defendants. The $n$, $n-1$ and $n-2$ strike groups comprise defendants who used all of their peremptory strikes, one less strike than the limit, and two less strikes than the limit, respectively.
\end{tablenotes} 
\end{threeparttable}
\end{figure}

\begin{figure}
[ph!]
\caption{Proportion of Crimes Across Peremptory Strike Classes: Prosecution\label{fig:Pl-Crime-Peremptory-Strikes-1}}

\noindent\noindent\resizebox{.99\textwidth}{!}{%

\includegraphics{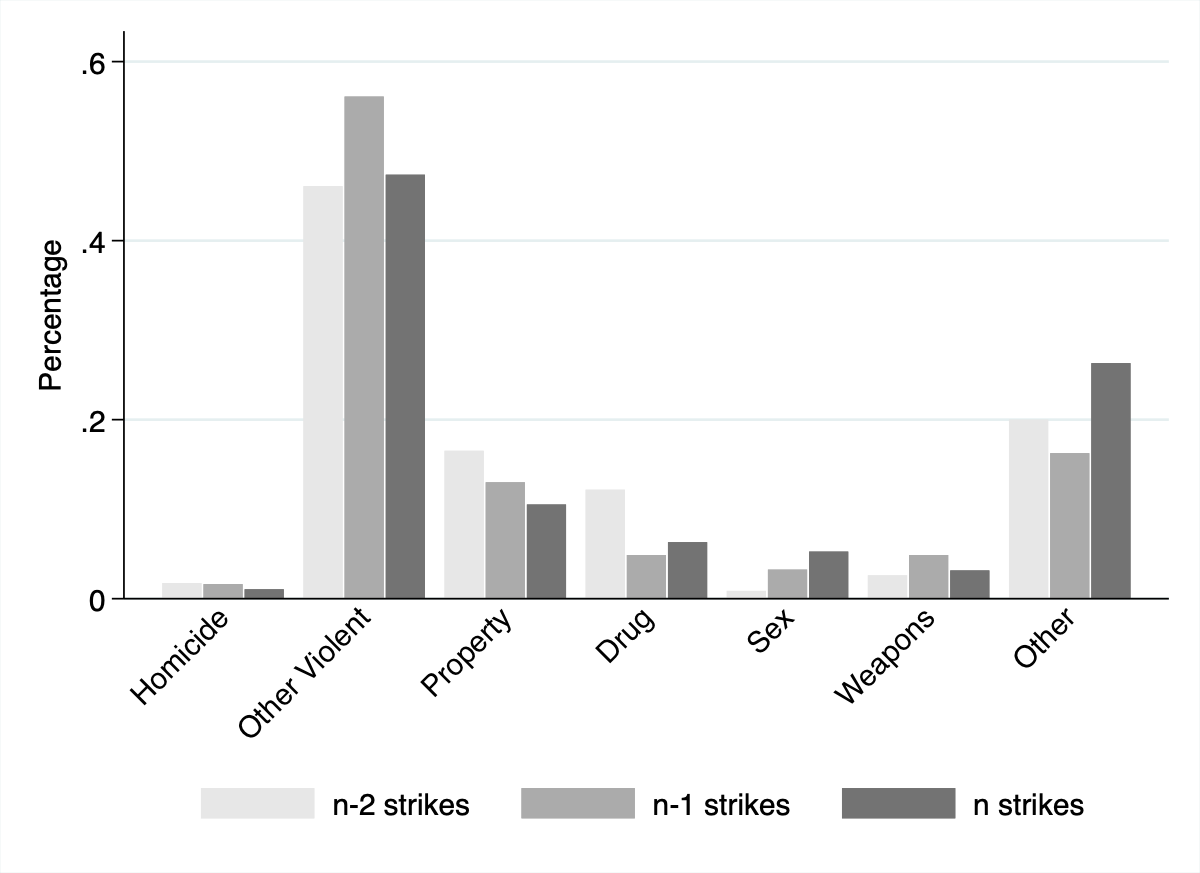}

}

\begin{threeparttable}
\begin{tablenotes} 
\item\emph{Notes}: This figure shows the percentage of cases in each of six different offense categories (as used in Anwar et al. 2012), across three different peremptory strike groups for the prosecution. The $n$, $n-1$ and $n-2$ strike groups comprise prosecutors who used all of their peremptory strikes, one less strike than the limit, and two less strikes than the limit, respectively.
\end{tablenotes} 
\end{threeparttable}
\end{figure}

\section*{\newpage Tables }

\addcontentsline{toc}{section}{Tables}

\begin{table}
[ph!]
\begin{center}
\begin{threeparttable}\caption{Hypothetical Jury Selection Strike Sheet\label{tab:Voir-Dire-Strike-Sheet}}

\begin{tabular}{llrrrrrrrrrr}
\multicolumn{12}{c}{Seat \#}\tabularnewline
\hline 
 &
 &
1 &
2 &
3 &
4 &
5 &
6 &
7 &
8 &
9 &
10\tabularnewline
\hline 
\hline 
\multirow{3}{*}{Row \#} &
1 &
JC &
JC &
J-1 &
D-1 &
D-2 &
J-2 &
JC &
JC &
P-1 &
J-3\tabularnewline
 & 2 &
DC &
J-4 &
JC &
D-3 &
PC &
D-4 &
P-2 &
D-5 &
P-3 &
JC\tabularnewline
 & 3 &
J-5 &
D-6 &
P-4 &
DC &
J-6 &
NU &
PC &
NU &
JC &
NU\tabularnewline
\hline 
\end{tabular}

\begin{tablenotes} 
\item\emph{Notes}: This table shows a hypothetical jury selection strike sheet for a non-life-eligible felony case. The pool comprises 30 jurors, ordered from row 1, seat 1, to row 3, seat 10. JC: pool members struck for cause by the judge; DC and PC: pool members struck for cause by the defense and prosecution, respectively. D-\# and P-\#: pool members for whom the defense and prosecution, respectively, issued peremptory strikes, with the \# of the strike used. J-\#: pool members selected for final jury. NU: pool members who were neither struck nor used in the final jury.
\end{tablenotes} 
\end{threeparttable}
\end{center}
\end{table}

\newpage{}
\begin{table}[H]
\begin{sideways}
\noindent\resizebox{.70\textheight}{!}{%
\begin{threeparttable}\caption{Defendant, Attorney and Pool Characteristics in $n$, $n-1$, and
$n-2$ Strike Groups by Party\label{tab:Summary-Statistics}}

\begin{tabular}{lr||rrrr||rrrr}
\hline 
 &
\multicolumn{1}{r}{} &
\multicolumn{8}{c}{}\tabularnewline
 &
 &
\multicolumn{4}{c||}{Defense} &
\multicolumn{4}{c}{Prosecution}\tabularnewline
\cline{2-10}
 &
All &
$n$ strikes &
$n-1$ strikes &
$n-2$ strikes &
$n$\emph{ v.}\ $n-1$ &
$n$ strikes &
$n-1$ strikes &
$n-2$ strikes &
$n$\emph{ v.}\ $n-1$\tabularnewline
 &
(1) &
(2) &
(3) &
(4) &
(5) &
(6) &
(7) &
(8) &
(9)\tabularnewline
\hline 
\hline 
\uline{Defendant characteristics} &
 &
 &
 &
 &
 &
 &
 &
 &
\tabularnewline
Age &
36.84 &
37.52 &
35.39 &
38.32 &
2.13 &
36.43 &
38.12 &
37.53 &
-1.69\tabularnewline
 &
(12.68) &
(13.27) &
(11.76) &
(12.52) &
(1.47) &
(11.69) &
(13.77) &
(12.58) &
(-0.97)\tabularnewline
Black &
0.55 &
0.52 &
0.52 &
0.49 &
-0.00 &
0.55 &
0.47 &
0.53 &
0.08\tabularnewline
 &
(0.50) &
(0.50) &
(0.50) &
(0.50) &
(-0.02) &
(0.50) &
(0.50) &
(0.50) &
(1.11)\tabularnewline
Hispanic &
0.13 &
0.14 &
0.14 &
0.14 &
0.00 &
0.17 &
0.13 &
0.12 &
0.04\tabularnewline
 &
(0.34) &
(0.35) &
(0.34) &
(0.34) &
(0.11) &
(0.38) &
(0.34) &
(0.33) &
(0.78)\tabularnewline
Female &
0.14 &
0.14 &
0.14 &
0.17 &
-0.00 &
0.13 &
0.10 &
0.17 &
0.03\tabularnewline
 &
(0.34) &
(0.35) &
(0.35) &
(0.38) &
(-0.09) &
(0.33) &
(0.30) &
(0.38) &
(0.66)\tabularnewline
\# Prev.\ Imprisonments &
0.55 &
0.43 &
0.52 &
0.74 &
-0.09 &
0.55 &
0.46 &
0.51 &
0.08\tabularnewline
 &
(1.32) &
(1.05) &
(1.31) &
(1.97) &
(-0.66) &
(1.30) &
(1.37) &
(1.33) &
(0.46)\tabularnewline
Years Prev.\ Imprison. &
1.26 &
1.33 &
0.73 &
1.20 &
0.60{*} &
1.37 &
1.20 &
1.00 &
0.18\tabularnewline
 &
(3.51) &
(4.12) &
(1.90) &
(3.46) &
(1.72) &
(3.88) &
(3.78) &
(2.47) &
(0.34)\tabularnewline
\uline{Attorney characteristics} &
 &
 &
 &
 &
 &
 &
 &
 &
\tabularnewline
Defense Experience &
9.71 &
7.82 &
9.66 &
10.31 &
-1.84 &
7.65 &
8.17 &
8.82 &
-0.52\tabularnewline
 &
(10.32) &
(10.02) &
(11.01) &
(10.50) &
(-1.48) &
(10.18) &
(10.73) &
(10.26) &
(-0.36)\tabularnewline
Prosecutor Experience &
5.51 &
3.63 &
5.18 &
5.88 &
-1.55{*}{*} &
3.71 &
4.27 &
4.97 &
-0.56\tabularnewline
 &
(5.66) &
(4.33) &
(6.38) &
(4.82) &
(-2.33) &
(4.32) &
(4.72) &
(6.18) &
(-0.90)\tabularnewline
Def.\ Law Sch.\ Rank &
87.86 &
80.99 &
95.02 &
93.27 &
-14.03{*}{*} &
92.28 &
80.83 &
89.83 &
11.46\tabularnewline
 &
(54.84) &
(53.70) &
(57.15) &
(56.66) &
(-2.14) &
(53.80) &
(53.43) &
(54.66) &
(1.54)\tabularnewline
Pr.\ Law Sch.\ Rank &
100.33 &
96.90 &
96.48 &
97.62 &
0.42 &
95.55 &
101.37 &
94.76 &
-5.82\tabularnewline
 &
(56.05) &
(54.87) &
(55.87) &
(57.03) &
(0.06) &
(57.66) &
(55.69) &
(56.26) &
(-0.74)\tabularnewline
Public Defender &
0.75 &
0.77 &
0.72 &
0.73 &
0.05 &
0.81 &
0.77 &
0.76 &
0.04\tabularnewline
 &
(0.44) &
(0.42) &
(0.45) &
(0.45) &
(0.93) &
(0.39) &
(0.42) &
(0.43) &
(0.69)\tabularnewline
\uline{Pool characteristics} &
 &
 &
 &
 &
 &
 &
 &
 &
\tabularnewline
Prop.\ Black in Pool &
0.13 &
0.13 &
0.13 &
0.14 &
-0.00 &
0.12 &
0.13 &
0.13 &
-0.01\tabularnewline
 &
(0.06) &
(0.06) &
(0.07) &
(0.06) &
(-0.19) &
(0.08) &
(0.06) &
(0.07) &
(-1.23)\tabularnewline
Prop.\ Female in Pool &
0.54 &
0.53 &
0.53 &
0.54 &
0.00 &
0.53 &
0.53 &
0.55 &
0.00\tabularnewline
 &
(0.09) &
(0.09) &
(0.10) &
(0.09) &
(0.02) &
(0.09) &
(0.09) &
(0.10) &
(0.23)\tabularnewline
Avg.\ Age in Pool &
46.91 &
47.12 &
46.70 &
46.53 &
0.42 &
47.05 &
46.79 &
46.82 &
0.26\tabularnewline
 &
(3.19) &
(3.31) &
(3.60) &
(2.69) &
(1.03) &
(3.24) &
(3.19) &
(3.64) &
(0.59)\tabularnewline
Median Pool Income &
62,404 &
62,596 &
62,052 &
62,366 &
544 &
63,205 &
62,264 &
61,639 &
941\tabularnewline
 &
(4,976) &
(5,436) &
(5,017) &
(4,891) &
(0.91) &
(4,938) &
(5,352) &
(5,336) &
(1.34)\tabularnewline
Prop.\ Democrat in Pool &
0.29 &
0.30 &
0.30 &
0.29 &
-0.00 &
0.29 &
0.30 &
0.30 &
-0.01\tabularnewline
 &
(0.08) &
(0.08) &
(0.08) &
(0.08) &
(-0.34) &
(0.07) &
(0.09) &
(0.09) &
(-0.70)\tabularnewline
Prop.\ Republican in Pool &
0.21 &
0.21 &
0.21 &
0.21 &
0.00 &
0.23 &
0.21 &
0.20 &
0.02{*}\tabularnewline
 &
(0.08) &
(0.08) &
(0.08) &
(0.07) &
(0.19) &
(0.08) &
(0.09) &
(0.08) &
(1.70)\tabularnewline
\hline 
Observations &
567 &
185 &
125 &
81 &
\textendash{} &
95 &
123 &
115 &
\textendash{}\tabularnewline
\hline 
\end{tabular}

\begin{tablenotes} 
\item\emph{Notes}: This table compares baseline defendant, attorney, and jury pool characteristics across different peremptory strike groups. Column (1) shows mean values across all cases in our primary dataset. The $n$, $n-1$, and $n-2$ strike groups comprise parties who exhausted all of their peremptory strikes, one less strike than the limit, and two less strikes than the limit, respectively, for defendants (columns (2)--(4)) and prosecutors (columns (6)--(8)). Attorney experience is number of years between attorney admittance to the Florida state bar and date of jury selection. Law school rankings are calculated based on 2018 U.S. News \& World Report rankings. Black, Hispanic, Female and Public Defender are dummy variables for these respective characteristics. Number of previous imprisonments is the number of previous stints a defendant had in Florida state prison; years of previous imprisonment is the total prison term for those stints. Proportion Black and proportion female are the proportion of Black and female jurors, respectively, in the pre-strike jury pool. Average age is the average age of pool members. Median pool income is calculated by estimating each juror's income by the median income in the zip code in which she resides (using U.S. Census data and 2017 inflation-adjusted dollars). Values in parentheses are standard deviations for columns (1)--(4) and (6)--(8), and t-values for two-sided t-tests in columns (5) and (9). ** = significant at 5\% level, * = significant at 10\% level.
\end{tablenotes} 
\end{threeparttable}
}
\end{sideways}
\end{table}

\newpage{}
\begin{table}[H]
\begin{sideways}
\noindent\resizebox{.8\textheight}{!}{%
\begin{threeparttable}\caption{Case Characteristics in $n$, $n-1$, and $n-2$ Strike Groups by
Party\label{tab:Summary-Statistics-1}}

\begin{tabular}{lr||rrrr||rrrr}
\hline 
 &
\multicolumn{1}{r}{} &
\multicolumn{8}{c}{}\tabularnewline
 &
 &
\multicolumn{4}{c||}{Defense} &
\multicolumn{4}{c}{Prosecution}\tabularnewline
\cline{2-10}
 &
All &
$n$ strikes &
$n-1$ strikes &
$n-2$ strikes &
$n$\emph{ v.}\ $n-1$ &
$n$ strikes &
$n-1$ strikes &
$n-2$ strikes &
$n$\emph{ v.}\ $n-1$\tabularnewline
 &
(1) &
(2) &
(3) &
(4) &
(5) &
(6) &
(7) &
(8) &
(9)\tabularnewline
\hline 
Homicide &
0.03 &
0.01 &
0.02 &
0.05 &
-0.01 &
0.01 &
0.02 &
0.02 &
-0.01\tabularnewline
 &
(0.18) &
(0.10) &
(0.13) &
(0.22) &
(-0.38) &
(0.10) &
(0.13) &
(0.13) &
(-0.37)\tabularnewline
Other Violent Offense &
0.40 &
0.49 &
0.38 &
0.40 &
0.10{*} &
0.47 &
0.56 &
0.46 &
-0.09\tabularnewline
 &
(0.49) &
(0.50) &
(0.49) &
(0.49) &
(1.79) &
(0.50) &
(0.50) &
(0.50) &
(-1.28)\tabularnewline
Property Offense &
0.18 &
0.12 &
0.19 &
0.17 &
-0.07{*} &
0.11 &
0.13 &
0.17 &
-0.02\tabularnewline
 &
(0.38) &
(0.32) &
(0.40) &
(0.38) &
(-1.71) &
(0.31) &
(0.34) &
(0.37) &
(-0.56)\tabularnewline
Drug Offense &
0.09 &
0.08 &
0.10 &
0.07 &
-0.02 &
0.06 &
0.05 &
0.12 &
0.01\tabularnewline
 &
(0.28) &
(0.27) &
(0.31) &
(0.26) &
(-0.67) &
(0.24) &
(0.22) &
(0.33) &
(0.45)\tabularnewline
Sex Offense &
0.07 &
0.04 &
0.05 &
0.07 &
-0.00 &
0.05 &
0.03 &
0.01 &
0.02\tabularnewline
 &
(0.26) &
(0.20) &
(0.21) &
(0.26) &
(-0.20) &
(0.22) &
(0.18) &
(0.09) &
(0.72)\tabularnewline
Weapons Offense &
0.04 &
0.04 &
0.04 &
0.04 &
-0.00 &
0.03 &
0.05 &
0.03 &
-0.02\tabularnewline
 &
(0.21) &
(0.19) &
(0.20) &
(0.19) &
(-0.10) &
(0.18) &
(0.22) &
(0.16) &
(-0.65)\tabularnewline
Other Offense &
0.18 &
0.22 &
0.22 &
0.20 &
0.01 &
0.26 &
0.16 &
0.20 &
0.10{*}\tabularnewline
 &
(0.39) &
(0.42) &
(0.41) &
(0.40) &
(0.12) &
(0.44) &
(0.37) &
(0.40) &
(1.78)\tabularnewline
Counts Charged &
2.44 &
2.10 &
2.43 &
3.00 &
-0.33 &
2.27 &
2.04 &
2.51 &
0.23\tabularnewline
 &
(2.37) &
(1.87) &
(2.69) &
(3.60) &
(-1.21) &
(2.68) &
(1.46) &
(2.47) &
(0.77)\tabularnewline
Correctional Setting &
0.02 &
0.02 &
0.01 &
0.02 &
0.01 &
0.00 &
0.03 &
0.01 &
-0.03{*}{*}\tabularnewline
 &
(0.13) &
(0.13) &
(0.09) &
(0.16) &
(0.69) &
(0.00) &
(0.18) &
(0.09) &
(-2.03)\tabularnewline
Photos/Video &
0.14 &
0.13 &
0.15 &
0.11 &
-0.02 &
0.11 &
0.13 &
0.17 &
-0.02\tabularnewline
 &
(0.35) &
(0.33) &
(0.36) &
(0.32) &
(-0.59) &
(0.31) &
(0.34) &
(0.37) &
(-0.55)\tabularnewline
Recovered Items &
0.28 &
0.23 &
0.27 &
0.36 &
-0.04 &
0.23 &
0.20 &
0.34 &
0.03\tabularnewline
 &
(0.45) &
(0.42) &
(0.45) &
(0.48) &
(-0.76) &
(0.42) &
(0.40) &
(0.48) &
(0.48)\tabularnewline
Forensic Evidence &
0.09 &
0.03 &
0.08 &
0.12 &
-0.05{*} &
0.05 &
0.03 &
0.06 &
0.02\tabularnewline
 &
(0.29) &
(0.18) &
(0.27) &
(0.33) &
(-1.68) &
(0.23) &
(0.18) &
(0.24) &
(0.72)\tabularnewline
Documentation &
0.22 &
0.15 &
0.18 &
0.28 &
-0.03 &
0.09 &
0.16 &
0.17 &
-0.07\tabularnewline
 &
(0.41) &
(0.36) &
(0.38) &
(0.45) &
(-0.63) &
(0.28) &
(0.37) &
(0.38) &
(-1.60)\tabularnewline
Confession &
0.15 &
0.17 &
0.14 &
0.15 &
0.02 &
0.15 &
0.17 &
0.10 &
-0.02\tabularnewline
 &
(0.36) &
(0.37) &
(0.35) &
(0.36) &
(0.54) &
(0.36) &
(0.38) &
(0.30) &
(-0.45)\tabularnewline
Admission &
0.40 &
0.37 &
0.38 &
0.49 &
-0.01 &
0.39 &
0.37 &
0.43 &
0.02\tabularnewline
 &
(0.49) &
(0.48) &
(0.49) &
(0.50) &
(-0.20) &
(0.49) &
(0.49) &
(0.50) &
(0.23)\tabularnewline
\hline 
Observations &
567 &
185 &
125 &
81 &
\textendash{} &
95 &
123 &
115 &
\textendash{}\tabularnewline
\hline 
\end{tabular}

\begin{tablenotes} 
\item\emph{Notes}: This table compares baseline case characteristics across different peremptory strike groups. Column (1) shows mean values across all cases in our primary dataset. The $n$, $n-1$, and $n-2$ strike groups comprise parties who exhausted all of their peremptory strikes, one less strike than the limit, and two less strikes than the limit, respectively, for defendants (columns (2)--(4)) and prosecutors (columns (6)--(8)). Correctional Setting is a dummy variable for whether the alleged offense took place in a prison/jail. Photos/Video, Recovered Items, Forensic Evidence, Confession, and Admission are dummy variables that describe the evidence in a case; they are as described in the text. Values in parentheses are standard deviations for columns (1)--(4) and (6)--(8), and t-values for two-sided t-tests in columns (5) and (9). ** = significant at 5\% level, * = significant at 10\% level.
\end{tablenotes} 
\end{threeparttable}
}
\end{sideways}
\end{table}

\newpage{}
\begin{table}
[ph!]
\center\resizebox{\textwidth}{!}{%
\begin{threeparttable}\caption{Effect of Strike Exhaustion on Conviction\label{tab:Defendant-Pl-Strike-Exhaustion}}

\begin{tabular}{lr@{\extracolsep{0pt}.}lr@{\extracolsep{0pt}.}lr@{\extracolsep{0pt}.}l|r@{\extracolsep{0pt}.}lr@{\extracolsep{0pt}.}lr@{\extracolsep{0pt}.}l}
\hline 
 &
\multicolumn{12}{c}{}\tabularnewline
 &
\multicolumn{6}{c|}{Defense} &
\multicolumn{6}{c}{Prosecution}\tabularnewline
\cline{2-13}
 &
\multicolumn{2}{c}{Guilty} &
\multicolumn{2}{c}{Guilty} &
\multicolumn{2}{c|}{Guilty} &
\multicolumn{2}{c}{Guilty} &
\multicolumn{2}{c}{Guilty} &
\multicolumn{2}{c}{Guilty}\tabularnewline
 &
\multicolumn{2}{c}{(1)} &
\multicolumn{2}{c}{(2)} &
\multicolumn{2}{c|}{(3)} &
\multicolumn{2}{c}{(4)} &
\multicolumn{2}{c}{(5)} &
\multicolumn{2}{c}{(6)}\tabularnewline
\hline 
\hline 
Exhausts Strikes &
0&12{*}{*} &
0&11{*}{*} &
0&12{*}{*} &
0&02 &
0&01 &
-0&01\tabularnewline
 &
(0&05) &
(0&05) &
(0&05) &
(0&06) &
(0&06) &
(0&06)\tabularnewline
n/n-1 Group &
-0&01 &
0&01 &
0&00 &
0&05 &
0&04 &
0&01\tabularnewline
 &
(0&05) &
(0&05) &
(0&05) &
(0&05) &
(0&05) &
(0&05)\tabularnewline
Felony &
0&30{*}{*}{*} &
0&30{*}{*}{*} &
0&24{*}{*}{*} &
0&29{*}{*}{*} &
0&28{*}{*}{*} &
0&21{*}{*}{*}\tabularnewline
 &
(0&04) &
(0&04) &
(0&06) &
(0&04) &
(0&04) &
(0&06)\tabularnewline
Life Eligible Crime &
0&03 &
0&03 &
-0&04 &
0&03 &
0&03 &
-0&06\tabularnewline
 &
(0&05) &
(0&05) &
(0&06) &
(0&05) &
(0&05) &
(0&07)\tabularnewline
Constant &
0&29{*}{*}{*} &
0&31{*}{*}{*} &
0&27{*}{*}{*} &
0&31{*}{*}{*} &
0&35{*}{*}{*} &
0&26{*}{*}{*}\tabularnewline
 &
(0&04) &
(0&09) &
(0&09) &
(0&04) &
(0&09) &
(0&09)\tabularnewline
\hline 
Observations &
\multicolumn{2}{c}{567} &
\multicolumn{2}{c}{559} &
\multicolumn{2}{c|}{559} &
\multicolumn{2}{c}{567} &
\multicolumn{2}{c}{559} &
\multicolumn{2}{c}{559}\tabularnewline
\hline 
Adj. R-squared &
0&10 &
0&10 &
0&10 &
0&09 &
0&09 &
0&10\tabularnewline
Defendant Controls &
\multicolumn{2}{c}{NO} &
\multicolumn{2}{c}{YES} &
\multicolumn{2}{c|}{YES} &
\multicolumn{2}{c}{NO} &
\multicolumn{2}{c}{YES} &
\multicolumn{2}{c}{YES}\tabularnewline
Case Controls &
\multicolumn{2}{c}{NO} &
\multicolumn{2}{c}{NO} &
\multicolumn{2}{c|}{YES} &
\multicolumn{2}{c}{NO} &
\multicolumn{2}{c}{NO} &
\multicolumn{2}{c}{YES}\tabularnewline
\hline 
\end{tabular}

\begin{tablenotes} 
\item\emph{Notes}: This table presents OLS regressions of peremptory strike exhaustion on guilt. The dependent variable in all specifications is the proportion of charges pending as of the date of jury selection on which a defendant is eventually convicted. The primary coefficient of interest is Exhausts Strikes, which = 1 if the party used up its $n$ strikes and = 0 if s/he used fewer ($<n$) strikes. n/n-1 Group = 1 if the party used either $n$ or $n-1$ strikes and = 0 otherwise. Columns (1)--(3) measure defense exhaustion of peremptory strikes; columns (4)--(6) measure prosecution exhaustion of those strikes. Defendant Controls include defendant age and years of previous imprisonment in Florida state prison, and separate dummy variables for whether a defendant is female, Black or Hispanic. Case Controls include number of for cause strikes issued by the judge, prosecution, and defense; total number of counts charged; a dummy for whether the opposing party exhausted its peremptory strikes; and a fixed effect for the year in which jury selection occurred. Heteroskedastic robust standard errors are reported in parentheses. *** = significant at 1\% level, ** = significant at 5\% level, * = significant at 10\% level.
\end{tablenotes} 
\end{threeparttable}
}
\end{table}

\newpage{}
\begin{table}
[ph!]
\noindent\resizebox{\textwidth}{!}{%
\begin{threeparttable}\caption{Effect of Strike Exhaustion on Conviction for Black and Non-Black
Defendants\label{tab:Defendant-Bl-NBl-Exhaustion}}

\begin{tabular}{lr@{\extracolsep{0pt}.}lr@{\extracolsep{0pt}.}lr@{\extracolsep{0pt}.}l|r@{\extracolsep{0pt}.}lr@{\extracolsep{0pt}.}lr@{\extracolsep{0pt}.}l}
\hline 
 &
\multicolumn{12}{c}{}\tabularnewline
 &
\multicolumn{6}{c|}{Black Defendants} &
\multicolumn{6}{c}{Non-Black Defendants}\tabularnewline
\cline{2-13}
 &
\multicolumn{2}{c}{Guilty} &
\multicolumn{2}{c}{Guilty} &
\multicolumn{2}{c|}{Guilty} &
\multicolumn{2}{c}{Guilty} &
\multicolumn{2}{c}{Guilty} &
\multicolumn{2}{c}{Guilty}\tabularnewline
 &
\multicolumn{2}{c}{(1)} &
\multicolumn{2}{c}{(2)} &
\multicolumn{2}{c|}{(3)} &
\multicolumn{2}{c}{(4)} &
\multicolumn{2}{c}{(5)} &
\multicolumn{2}{c}{(6)}\tabularnewline
\hline 
\hline 
Exhausts Strikes &
0&19{*}{*}{*} &
0&20{*}{*}{*} &
0&18{*}{*}{*} &
0&06 &
0&04 &
0&07\tabularnewline
 &
(0&07) &
(0&07) &
(0&07) &
(0&07) &
(0&07) &
(0&08)\tabularnewline
n/n-1 Group &
-0&07 &
-0&06 &
-0&07 &
0&07 &
0&08 &
0&08\tabularnewline
 &
(0&06) &
(0&07) &
(0&07) &
(0&07) &
(0&07) &
(0&07)\tabularnewline
Felony &
0&35{*}{*}{*} &
0&34{*}{*}{*} &
0&31{*}{*}{*} &
0&28{*}{*}{*} &
0&27{*}{*}{*} &
0&16{*}\tabularnewline
 &
(0&06) &
(0&06) &
(0&08) &
(0&06) &
(0&06) &
(0&09)\tabularnewline
Life Eligible Crime &
0&02 &
0&03 &
0&01 &
0&05 &
0&06 &
-0&10\tabularnewline
 &
(0&07) &
(0&07) &
(0&08) &
(0&08) &
(0&08) &
(0&10)\tabularnewline
Constant &
0&25{*}{*}{*} &
0&15 &
0&12 &
0&31{*}{*}{*} &
0&40{*}{*}{*} &
0&36{*}{*}{*}\tabularnewline
 &
(0&06) &
(0&10) &
(0&11) &
(0&06) &
(0&11) &
(0&13)\tabularnewline
\hline 
Observations &
\multicolumn{2}{c}{310} &
\multicolumn{2}{c}{306} &
\multicolumn{2}{c|}{306} &
\multicolumn{2}{c}{257} &
\multicolumn{2}{c}{253} &
\multicolumn{2}{c}{253}\tabularnewline
\hline 
Adj.\ R-squared &
0&12 &
0&12 &
0&11 &
0&08 &
0&07 &
0&09\tabularnewline
Defendant Controls &
\multicolumn{2}{c}{NO} &
\multicolumn{2}{c}{YES} &
\multicolumn{2}{c|}{YES} &
\multicolumn{2}{c}{NO} &
\multicolumn{2}{c}{YES} &
\multicolumn{2}{c}{YES}\tabularnewline
Case Controls &
\multicolumn{2}{c}{NO} &
\multicolumn{2}{c}{NO} &
\multicolumn{2}{c|}{YES} &
\multicolumn{2}{c}{NO} &
\multicolumn{2}{c}{NO} &
\multicolumn{2}{c}{YES}\tabularnewline
\hline 
\end{tabular}

\begin{tablenotes} 
\item\emph{Notes}: This table presents OLS regressions of defendant peremptory strike exhaustion on guilt, split by defendant race. The dependent variable in all specifications is the proportion of charges pending as of the date of jury selection on which a defendant is eventually convicted. Columns (1)--(3) measure exhaustion of peremptory strikes by Black defendants; columns (4)--(6) measure exhaustion of strikes by non-Black defendants. The primary coefficient of interest is Exhausts Strikes, which = 1 if the defendant used up its $n$ strikes and = 0 if s/he used fewer ($<n$) strikes. n/n-1 Group = 1 if the defendant used either $n$ or $n-1$ strikes and = 0 otherwise. Defendant Controls are as described in Table 4, except a dummy variable for whether a defendant is Hispanic is only used in cols. (5)--(6). Case Controls are as described in Table 4. Heteroskedastic robust standard errors are reported in parentheses. *** = significant at 1\% level, ** = significant at 5\% level, * = significant at 10\% level.
\end{tablenotes} 
\end{threeparttable}
}
\end{table}

\newpage{}
\begin{table}
[ph!]
\noindent\resizebox{\textwidth}{!}{%
\begin{threeparttable}\caption{Effect of Defendant Strike Exhaustion With Jury Composition Controls\label{tab:Defendant-All-Bl-Jury-Composition}}

\begin{tabular}{lr@{\extracolsep{0pt}.}lr@{\extracolsep{0pt}.}lr@{\extracolsep{0pt}.}l|r@{\extracolsep{0pt}.}lr@{\extracolsep{0pt}.}lr@{\extracolsep{0pt}.}l}
\hline 
 &
\multicolumn{12}{c}{}\tabularnewline
 &
\multicolumn{6}{c|}{All Defendants} &
\multicolumn{6}{c}{Black Defendants}\tabularnewline
\cline{2-13}
 &
\multicolumn{2}{c}{Guilty} &
\multicolumn{2}{c}{Guilty} &
\multicolumn{2}{c|}{Guilty} &
\multicolumn{2}{c}{Guilty} &
\multicolumn{2}{c}{Guilty} &
\multicolumn{2}{c}{Guilty}\tabularnewline
 &
\multicolumn{2}{c}{(1)} &
\multicolumn{2}{c}{(2)} &
\multicolumn{2}{c|}{(3)} &
\multicolumn{2}{c}{(4)} &
\multicolumn{2}{c}{(5)} &
\multicolumn{2}{c}{(6)}\tabularnewline
\hline 
\hline 
Exhausts Strikes &
0&12{*}{*} &
0&11{*}{*} &
0&11{*}{*} &
0&20{*}{*}{*} &
0&20{*}{*}{*} &
0&19{*}{*}{*}\tabularnewline
 &
(0&05) &
(0&05) &
(0&05) &
(0&07) &
(0&07) &
(0&07)\tabularnewline
n/n-1 Group &
0&01 &
0&02 &
0&02 &
-0&07 &
-0&06 &
-0&06\tabularnewline
 &
(0&05) &
(0&05) &
(0&05) &
(0&06) &
(0&06) &
(0&07)\tabularnewline
Felony &
0&30{*}{*}{*} &
0&29{*}{*}{*} &
0&22{*}{*}{*} &
0&34{*}{*}{*} &
0&33{*}{*}{*} &
0&30{*}{*}{*}\tabularnewline
 &
(0&04) &
(0&04) &
(0&06) &
(0&06) &
(0&06) &
(0&08)\tabularnewline
Life Eligible Crime &
0&03 &
0&04 &
-0&04 &
0&02 &
0&03 &
0&01\tabularnewline
 &
(0&05) &
(0&05) &
(0&06) &
(0&07) &
(0&07) &
(0&08)\tabularnewline
Prop. Black in Pool &
0&46 &
0&48 &
0&49 &
0&71 &
0&77{*} &
0&82{*}\tabularnewline
 &
(0&32) &
(0&32) &
(0&32) &
(0&45) &
(0&45) &
(0&45)\tabularnewline
Avg. Age in Pool &
0&00 &
0&00 &
0&00 &
-0&00 &
-0&00 &
-0&00\tabularnewline
 &
(0&01) &
(0&01) &
(0&01) &
(0&01) &
(0&01) &
(0&01)\tabularnewline
Prop. Female in Pool &
0&09 &
0&14 &
0&14 &
0&26 &
0&31 &
0&30\tabularnewline
 &
(0&20) &
(0&20) &
(0&20) &
(0&28) &
(0&28) &
(0&29)\tabularnewline
Ln Median Pool Income &
0&35 &
0&34 &
0&37{*} &
0&51{*} &
0&54{*} &
0&55{*}\tabularnewline
 &
(0&22) &
(0&22) &
(0&23) &
(0&30) &
(0&30) &
(0&31)\tabularnewline
Prop. Democrat in Pool &
-0&29 &
-0&30 &
-0&30 &
-0&10 &
-0&03 &
-0&04\tabularnewline
 &
(0&25) &
(0&25) &
(0&25) &
(0&32) &
(0&33) &
(0&35)\tabularnewline
Prop. Republican in Pool &
0&40 &
0&38 &
0&37 &
0&62{*} &
0&67{*}{*} &
0&65{*}\tabularnewline
 &
(0&25) &
(0&25) &
(0&26) &
(0&33) &
(0&33) &
(0&35)\tabularnewline
Constant &
-3&78 &
-3&78 &
-4&17{*} &
-5&58{*} &
-6&21{*} &
-6&27{*}\tabularnewline
 &
(2&46) &
(2&49) &
(2&53) &
(3&27) &
(3&33) &
(3&44)\tabularnewline
\hline 
Observations &
\multicolumn{2}{c}{567} &
\multicolumn{2}{c}{559} &
\multicolumn{2}{c|}{559} &
\multicolumn{2}{c}{310} &
\multicolumn{2}{c}{306} &
\multicolumn{2}{c}{306}\tabularnewline
\hline 
Adj. R-squared &
0&10 &
0&11 &
0&11 &
0&13 &
0&14 &
0&13\tabularnewline
Defendant Controls &
\multicolumn{2}{c}{NO} &
\multicolumn{2}{c}{YES} &
\multicolumn{2}{c|}{YES} &
\multicolumn{2}{c}{NO} &
\multicolumn{2}{c}{YES} &
\multicolumn{2}{c}{YES}\tabularnewline
Case Controls &
\multicolumn{2}{c}{NO} &
\multicolumn{2}{c}{NO} &
\multicolumn{2}{c|}{YES} &
\multicolumn{2}{c}{NO} &
\multicolumn{2}{c}{NO} &
\multicolumn{2}{c}{YES}\tabularnewline
\hline 
\end{tabular}

\begin{tablenotes} 
\item\emph{Notes}: This table presents OLS regressions of defendant peremptory strike exhaustion on guilt, with controls for jury pool composition. The dependent variable in all specifications is the proportion of charges pending as of the date of jury selection on which a defendant is eventually convicted. Columns (1)--(3) include all defendants; columns (4)--(6) are limited to Black defendants. Exhausts Strikes and n/n-1 Group are as described in Table 5. Defendant Controls are as described in Table 4, except dummy variables for whether a defendant is Black or Hispanic are only used in cols. (2)--(3). Case Controls are as described in Table 4. The various jury composition controls are as described in the text. Heteroskedastic robust standard errors are reported in parentheses. *** = significant at 1\% level, ** = significant at 5\% level, * = significant at 10\% level.
\end{tablenotes} 
\end{threeparttable}
}
\end{table}
\newpage{}
\begin{table}
[ph!]
\noindent\resizebox{\textwidth}{!}{%
\begin{threeparttable}\caption{Effect of Defendant Strike Exhaustion on Conviction: Attorney Controls\label{tab:Defendant-Strike-Atty}}

\begin{tabular}{lr@{\extracolsep{0pt}.}lr@{\extracolsep{0pt}.}lr@{\extracolsep{0pt}.}lr@{\extracolsep{0pt}.}l|r@{\extracolsep{0pt}.}lr@{\extracolsep{0pt}.}lr@{\extracolsep{0pt}.}lr@{\extracolsep{0pt}.}l}
\hline 
 &
\multicolumn{16}{c}{}\tabularnewline
 &
\multicolumn{8}{c|}{All Defendants} &
\multicolumn{8}{c}{Black Defendants}\tabularnewline
\cline{2-17}
 &
\multicolumn{2}{c}{Guilty} &
\multicolumn{2}{c}{Guilty} &
\multicolumn{2}{c}{Guilty} &
\multicolumn{2}{c|}{Guilty} &
\multicolumn{2}{c}{Guilty} &
\multicolumn{2}{c}{Guilty} &
\multicolumn{2}{c}{Guilty} &
\multicolumn{2}{c}{Guilty}\tabularnewline
 &
\multicolumn{2}{c}{(1)} &
\multicolumn{2}{c}{(2)} &
\multicolumn{2}{c}{(3)} &
\multicolumn{2}{c|}{(4)} &
\multicolumn{2}{c}{(5)} &
\multicolumn{2}{c}{(6)} &
\multicolumn{2}{c}{(7)} &
\multicolumn{2}{c}{(8)}\tabularnewline
\hline 
\hline 
Exhausts Strikes &
0&15{*}{*}{*} &
0&17{*}{*}{*} &
0&16{*}{*}{*} &
0&21{*}{*}{*} &
0&19{*}{*}{*} &
0&23{*}{*}{*} &
0&21{*}{*}{*} &
0&25{*}{*}{*}\tabularnewline
 &
(0&05) &
(0&06) &
(0&06) &
(0&07) &
(0&07) &
(0&07) &
(0&08) &
(0&09)\tabularnewline
n/n-1 Group &
-0&04 &
-0&06 &
-0&03 &
-0&08 &
-0&10 &
-0&14{*} &
-0&08 &
-0&13\tabularnewline
 &
(0&05) &
(0&05) &
(0&06) &
(0&07) &
(0&07) &
(0&07) &
(0&09) &
(0&09)\tabularnewline
Felony &
0&19{*}{*}{*} &
0&21{*}{*}{*} &
0&20{*}{*} &
0&19{*}{*} &
0&27{*}{*}{*} &
0&29{*}{*}{*} &
0&26{*}{*} &
0&27{*}{*}\tabularnewline
 &
(0&06) &
(0&06) &
(0&08) &
(0&08) &
(0&09) &
(0&09) &
(0&11) &
(0&12)\tabularnewline
Life Eligible Crime &
-0&06 &
-0&04 &
0&03 &
0&03 &
-0&04 &
-0&01 &
0&08 &
0&10\tabularnewline
 &
(0&06) &
(0&06) &
(0&08) &
(0&08) &
(0&08) &
(0&08) &
(0&09) &
(0&09)\tabularnewline
Constant &
0&30{*}{*}{*} &
0&32{*}{*} &
0&29{*}{*} &
0&34{*}{*} &
0&18 &
0&17 &
0&12 &
0&14\tabularnewline
 &
(0&11) &
(0&13) &
(0&12) &
(0&14) &
(0&15) &
(0&16) &
(0&15) &
(0&17)\tabularnewline
\hline 
Observations &
\multicolumn{2}{c}{527} &
\multicolumn{2}{c}{527} &
\multicolumn{2}{c}{393} &
\multicolumn{2}{c|}{393} &
\multicolumn{2}{c}{293} &
\multicolumn{2}{c}{293} &
\multicolumn{2}{c}{227} &
\multicolumn{2}{c}{227}\tabularnewline
\hline 
Adj.\ R-squared &
0&14 &
0&13 &
0&13 &
0&15 &
0&15 &
0&16 &
0&13 &
0&14\tabularnewline
Defendant \& Case Ctrls &
\multicolumn{2}{c}{YES} &
\multicolumn{2}{c}{YES} &
\multicolumn{2}{c}{YES} &
\multicolumn{2}{c|}{YES} &
\multicolumn{2}{c}{YES} &
\multicolumn{2}{c}{YES} &
\multicolumn{2}{c}{YES} &
\multicolumn{2}{c}{YES}\tabularnewline
Attorney Controls &
\multicolumn{2}{c}{YES} &
\multicolumn{2}{c}{YES} &
\multicolumn{2}{c}{YES} &
\multicolumn{2}{c|}{YES} &
\multicolumn{2}{c}{YES} &
\multicolumn{2}{c}{YES} &
\multicolumn{2}{c}{YES} &
\multicolumn{2}{c}{YES}\tabularnewline
Both Sides &
\multicolumn{2}{c}{NO} &
\multicolumn{2}{c}{YES} &
\multicolumn{2}{c}{NO} &
\multicolumn{2}{c|}{YES} &
\multicolumn{2}{c}{NO} &
\multicolumn{2}{c}{YES} &
\multicolumn{2}{c}{NO} &
\multicolumn{2}{c}{YES}\tabularnewline
PD Only &
\multicolumn{2}{c}{NO} &
\multicolumn{2}{c}{NO} &
\multicolumn{2}{c}{YES} &
\multicolumn{2}{c|}{YES} &
\multicolumn{2}{c}{NO} &
\multicolumn{2}{c}{NO} &
\multicolumn{2}{c}{YES} &
\multicolumn{2}{c}{YES}\tabularnewline
\hline 
\end{tabular}

\begin{tablenotes} 
\item\emph{Notes}: This table presents OLS regressions of defendant peremptory strike exhaustion on guilt, with controls for attorney characteristics. The dependent variable in all specifications is the proportion of charges pending as of the date of jury selection on which a defendant is eventually convicted. Columns (1)-(4) include all defendants; columns (5)-(8) are limited to Black defendants. Exhausts Strikes and n/n-1 Group are as described in Table 5. Defendant \& Case Controls are as described in Table 4 (except dummy variables for whether a defendant is Black or Hispanic are only used in cols. (1)--(4)). Attorney Controls include separate controls for both prosecutor and defense attorney years of experience and ranking of law school attended. Both Sides controls for the number of other cases in our sample in which a prosecutor or defense attorney used up her peremptory strikes, and the number of other cases in our sample in which she used one less strike than the limit. PD Only limits the sample only to cases involving public defenders; the other columns include a dummy variable for whether the defense attorney was a public defender. Heteroskedastic robust standard errors are reported in parentheses. *** = significant at 1\% level, ** = significant at 5\% level, * = significant at 10\% level.
\end{tablenotes} 
\end{threeparttable}
}
\end{table}

\newpage{}
\begin{table}
[ph!]
\noindent\resizebox{.96\textwidth}{!}{%
\begin{threeparttable}\caption{Effect of Defendant Strike Exhaustion on Conviction: Charge/Judge
Fixed Effects\label{tab:Defendant-Strike-Year-Charge-Judge-FEs}}

\begin{tabular}{lr@{\extracolsep{0pt}.}lr@{\extracolsep{0pt}.}lr@{\extracolsep{0pt}.}l|r@{\extracolsep{0pt}.}lr@{\extracolsep{0pt}.}lr@{\extracolsep{0pt}.}l}
\hline 
 &
\multicolumn{12}{c}{}\tabularnewline
 &
\multicolumn{6}{c|}{All Defendants} &
\multicolumn{6}{c}{Black Defendants}\tabularnewline
\cline{2-13}
 &
\multicolumn{2}{c}{Guilty} &
\multicolumn{2}{c}{Guilty} &
\multicolumn{2}{c|}{Guilty} &
\multicolumn{2}{c}{Guilty} &
\multicolumn{2}{c}{Guilty} &
\multicolumn{2}{c}{Guilty}\tabularnewline
 &
\multicolumn{2}{c}{(1)} &
\multicolumn{2}{c}{(2)} &
\multicolumn{2}{c|}{(3)} &
\multicolumn{2}{c}{(4)} &
\multicolumn{2}{c}{(5)} &
\multicolumn{2}{c}{(6)}\tabularnewline
\hline 
\hline 
Exhausts Strikes &
0&12{*}{*} &
0&12{*}{*} &
0&16{*}{*}{*} &
0&18{*}{*} &
0&19{*}{*} &
0&21{*}{*}{*}\tabularnewline
 &
(0&05) &
(0&06) &
(0&06) &
(0&07) &
(0&08) &
(0&08)\tabularnewline
n/n-1 Group &
-0&00 &
0&02 &
-0&04 &
-0&08 &
-0&05 &
-0&12\tabularnewline
 &
(0&05) &
(0&05) &
(0&05) &
(0&07) &
(0&08) &
(0&08)\tabularnewline
Felony &
0&17{*}{*}{*} &
0&13 &
-0&01 &
0&22{*}{*} &
0&35 &
0&12\tabularnewline
 &
(0&06) &
(0&28) &
(0&27) &
(0&09) &
(0&32) &
(0&32)\tabularnewline
Life Eligible Crime &
-0&03 &
-0&04 &
-0&07 &
0&08 &
0&02 &
0&03\tabularnewline
 &
(0&07) &
(0&07) &
(0&07) &
(0&09) &
(0&09) &
(0&09)\tabularnewline
Constant &
0&44{*}{*}{*} &
0&20 &
0&37{*} &
0&13 &
0&09 &
0&08\tabularnewline
 &
(0&14) &
(0&15) &
(0&21) &
(0&20) &
(0&21) &
(0&30)\tabularnewline
\hline 
Observations &
\multicolumn{2}{c}{552} &
\multicolumn{2}{c}{552} &
\multicolumn{2}{c|}{520} &
\multicolumn{2}{c}{303} &
\multicolumn{2}{c}{303} &
\multicolumn{2}{c}{290}\tabularnewline
\hline 
Adj.\ R-squared &
0&11 &
0&10 &
0&15 &
0&13 &
0&09 &
0&14\tabularnewline
Defendant \& Case Ctrls &
\multicolumn{2}{c}{YES} &
\multicolumn{2}{c}{YES} &
\multicolumn{2}{c|}{YES} &
\multicolumn{2}{c}{YES} &
\multicolumn{2}{c}{YES} &
\multicolumn{2}{c}{YES}\tabularnewline
Charge Fixed Effect &
\multicolumn{2}{c}{YES} &
\multicolumn{2}{c}{NO} &
\multicolumn{2}{c|}{YES} &
\multicolumn{2}{c}{YES} &
\multicolumn{2}{c}{NO} &
\multicolumn{2}{c}{YES}\tabularnewline
Judge Fixed Effect &
\multicolumn{2}{c}{NO} &
\multicolumn{2}{c}{YES} &
\multicolumn{2}{c|}{YES} &
\multicolumn{2}{c}{NO} &
\multicolumn{2}{c}{YES} &
\multicolumn{2}{c}{YES}\tabularnewline
Attorney Controls &
\multicolumn{2}{c}{NO} &
\multicolumn{2}{c}{NO} &
\multicolumn{2}{c|}{YES} &
\multicolumn{2}{c}{NO} &
\multicolumn{2}{c}{NO} &
\multicolumn{2}{c}{YES}\tabularnewline
\hline 
\end{tabular}

\begin{tablenotes} 
\item\emph{Notes}: This table presents OLS regressions of defendant exhaustion of peremptory strikes on guilt, with the inclusion of various fixed effects. The dependent variable in all specifications is the proportion of charges pending as of the date of jury selection on which a defendant is eventually convicted. Columns (1)--(3) include all defendants; columns (4)--(6) are limited to Black defendants. Exhausts Strikes and n/n-1 Group are as described in Table 5. Defendant \& Case Controls are as described in Table 4, except dummy variables for whether a defendant is Black or Hispanic are only used in cols. (1)--(3). Each specification also includes dummy variables for whether the alleged crime took place in a Correctional Setting, or whether the criminal affidavit described the following evidence against the defendant: Photos/Videos, Recovered Items, Forensic Evidence, Documentation, Admission, and Confession (all described in the text). Charge Fixed Effect controls for offense category (as defined in Anwar et al. 2012) of the first charged offense remaining as of the date of jury selection. Judge Fixed Effect controls for the judge assigned to the case. Attorney Controls include separate controls for both prosecutor and defense attorney years of experience and ranking of law school attended, as well as a dummy variable if the defense attorney was a public defender. Heteroskedastic robust standard errors are reported in parentheses. *** = significant at 1\% level, ** = significant at 5\% level, * = significant at 10\% level. 
\end{tablenotes} 
\end{threeparttable}
}
\end{table}

\newpage{}
\begin{table}
[ph!]
\noindent\resizebox{\textwidth}{!}{%
\begin{threeparttable}\caption{Effect of Defendant Strike Exhaustion on Conviction: Different Strike
Definitions\label{tab:Defendant-Strike-Diff-Defs}}

\begin{tabular}{lr@{\extracolsep{0pt}.}lr@{\extracolsep{0pt}.}lr@{\extracolsep{0pt}.}lr@{\extracolsep{0pt}.}l|r@{\extracolsep{0pt}.}lr@{\extracolsep{0pt}.}lr@{\extracolsep{0pt}.}l}
\hline 
 &
\multicolumn{14}{c}{}\tabularnewline
 &
\multicolumn{8}{c|}{All Defendants} &
\multicolumn{6}{c}{Black Defendants}\tabularnewline
\cline{2-15}
 &
\multicolumn{2}{c}{Guilty} &
\multicolumn{2}{c}{Guilty} &
\multicolumn{2}{c}{Guilty} &
\multicolumn{2}{c|}{Guilty} &
\multicolumn{2}{c}{Guilty} &
\multicolumn{2}{c}{Guilty} &
\multicolumn{2}{c}{Guilty}\tabularnewline
 &
\multicolumn{2}{c}{(1)} &
\multicolumn{2}{c}{(2)} &
\multicolumn{2}{c}{(3)} &
\multicolumn{2}{c|}{(4)} &
\multicolumn{2}{c}{(5)} &
\multicolumn{2}{c}{(6)} &
\multicolumn{2}{c}{(7)}\tabularnewline
\hline 
\hline 
Exhausts Strikes &
0&12{*}{*}{*} &
0&23{*}{*} &
0&21{*}{*} &
0&12{*}{*} &
0&15{*}{*}{*} &
0&20 &
0&26{*}\tabularnewline
 &
(0&04) &
(0&10) &
(0&09) &
(0&05) &
(0&06) &
(0&14) &
(0&15)\tabularnewline
n/n-1 Group &
\multicolumn{2}{c}{\textendash{}} &
-0&25{*} &
-0&23{*}{*} &
-0&02 &
\multicolumn{2}{c}{\textendash{}} &
-0&29 &
-0&38{*}{*}\tabularnewline
 &
\multicolumn{2}{c}{} &
(0&14) &
(0&11) &
(0&06) &
\multicolumn{2}{c}{} &
(0&21) &
(0&16)\tabularnewline
Felony &
0&24{*}{*}{*} &
0&27{*} &
0&15 &
0&25{*}{*}{*} &
0&33{*}{*}{*} &
0&08 &
0&30\tabularnewline
 &
(0&06) &
(0&14) &
(0&12) &
(0&06) &
(0&08) &
(0&19) &
(0&18)\tabularnewline
Life Eligible Crime &
-0&04 &
0&29{*}{*} &
-0&38 &
-0&04 &
0&03 &
\multicolumn{2}{c}{\textendash{}} &
-0&44{*}\tabularnewline
 &
(0&06) &
(0&14) &
(0&24) &
(0&07) &
(0&08) &
\multicolumn{2}{c}{} &
(0&23)\tabularnewline
Constant &
0&34{*}{*}{*} &
0&05 &
0&57{*}{*} &
0&14 &
0&10 &
0&10 &
0&44\tabularnewline
 &
(0&10) &
(0&21) &
(0&24) &
(0&14) &
(0&11) &
(0&26) &
(0&41)\tabularnewline
\hline 
Observations &
\multicolumn{2}{c}{559} &
\multicolumn{2}{c}{92} &
\multicolumn{2}{c}{121} &
\multicolumn{2}{c|}{484} &
\multicolumn{2}{c}{306} &
\multicolumn{2}{c}{51} &
\multicolumn{2}{c}{56}\tabularnewline
\hline 
Adj.\ R-squared &
0&10 &
0&21 &
0&17 &
0&12 &
0&12 &
-0&02 &
0&26\tabularnewline
Defendant \& Case Ctrls &
\multicolumn{2}{c}{YES} &
\multicolumn{2}{c}{YES} &
\multicolumn{2}{c}{YES} &
\multicolumn{2}{c|}{YES} &
\multicolumn{2}{c}{YES} &
\multicolumn{2}{c}{YES} &
\multicolumn{2}{c}{YES}\tabularnewline
Only Black \& White Defs. &
\multicolumn{2}{c}{NO} &
\multicolumn{2}{c}{NO} &
\multicolumn{2}{c}{NO} &
\multicolumn{2}{c|}{YES} &
\multicolumn{2}{c}{NO} &
\multicolumn{2}{c}{NO} &
\multicolumn{2}{c}{NO}\tabularnewline
\# Prosecution Strikes &
\multicolumn{2}{c}{\textendash{}} &
\multicolumn{2}{c}{$n$} &
\multicolumn{2}{c}{$n-1$} &
\multicolumn{2}{c|}{\textendash{}} &
\multicolumn{2}{c}{\textendash{}} &
\multicolumn{2}{c}{$n$} &
\multicolumn{2}{c}{$n-1$}\tabularnewline
\hline 
\end{tabular}

\begin{tablenotes} 
\item\emph{Notes}: This table presents alternate specifications for OLS regressions of defendant peremptory strike exhaustion on guilt. The dependent variable in all specifications is the proportion of charges pending as of the date of jury selection on which a defendant is eventually convicted. Def. Exhausts Strikes is as described before in Table 5, except in columns (1) and (5) this variable = 0 if the defendant used fewer strikes than the limit ($n$ strikes); in all other columns it = 0 if the defendant used just one less strike than the limit ($n-1$ strikes). n/n-1 Group is as described in Table 5. Defendant \& Case Controls are as described in Table 4 (except dummy variables for whether a defendant is Black or Hispanic are only used in cols. (1)--(4) and cols. (1)--(3), respectively). Columns (2), (6), and (3), (7) limit the sample to cases in which the prosecution used $n$ or $n-1$ of its strikes, respectively. Column (4) limits the sample by race to just Black and \white{} defendants. Heteroskedastic robust standard errors are reported in parentheses. *** = significant at 1\% level, ** = significant at 5\% level, * = significant at 10\% level. 
\end{tablenotes} 
\end{threeparttable}
}
\end{table}

\newpage{}
\begin{table}
[ph!]
\begin{sideways}
\noindent\resizebox{.85\textheight}{!}{%
\begin{threeparttable}\caption{Effect of Defendant Strike Exhaustion on Conviction: Verdict-Only
Outcomes\label{tab:Defendant-Strike-Diff-Guilt-Defs}}

\begin{tabular}{lr@{\extracolsep{0pt}.}lr@{\extracolsep{0pt}.}lr@{\extracolsep{0pt}.}lr@{\extracolsep{0pt}.}lr@{\extracolsep{0pt}.}l|r@{\extracolsep{0pt}.}lr@{\extracolsep{0pt}.}lr@{\extracolsep{0pt}.}lr@{\extracolsep{0pt}.}lr@{\extracolsep{0pt}.}l}
\hline 
 &
\multicolumn{6}{c}{} &
\multicolumn{10}{c}{} &
\multicolumn{2}{c}{} &
\multicolumn{2}{c}{}\tabularnewline
 &
\multicolumn{10}{c|}{All Defendants} &
\multicolumn{10}{c}{Black Defendants}\tabularnewline
\cline{2-21}
 &
\multicolumn{2}{c}{Guilty} &
\multicolumn{2}{c}{Guilty} &
\multicolumn{2}{c}{Guilty} &
\multicolumn{2}{c}{Guilty} &
\multicolumn{2}{c|}{Guilty} &
\multicolumn{2}{c}{Guilty} &
\multicolumn{2}{c}{Guilty} &
\multicolumn{2}{c}{Guilty} &
\multicolumn{2}{c}{Guilty} &
\multicolumn{2}{c}{Guilty}\tabularnewline
 &
\multicolumn{2}{c}{(1)} &
\multicolumn{2}{c}{(2)} &
\multicolumn{2}{c}{(3)} &
\multicolumn{2}{c}{(4)} &
\multicolumn{2}{c|}{(5)} &
\multicolumn{2}{c}{(6)} &
\multicolumn{2}{c}{(7)} &
\multicolumn{2}{c}{(8)} &
\multicolumn{2}{c}{(9)} &
\multicolumn{2}{c}{(10)}\tabularnewline
\hline 
\hline 
Exhausts Strikes &
0&11{*} &
0&07 &
0&11{*} &
0&11{*} &
0&13{*}{*} &
0&17{*}{*} &
0&12 &
0&14{*} &
0&20{*}{*} &
0&20{*}{*}{*}\tabularnewline
 &
(0&06) &
(0&06) &
(0&06) &
(0&06) &
(0&06) &
(0&07) &
(0&08) &
(0&08) &
(0&08) &
(0&08)\tabularnewline
n/n-1 Group &
-0&01 &
0&03 &
0&00 &
-0&00 &
-0&05 &
-0&07 &
-0&03 &
-0&05 &
-0&09 &
-0&12\tabularnewline
 &
(0&06) &
(0&06) &
(0&06) &
(0&07) &
(0&06) &
(0&08) &
(0&08) &
(0&08) &
(0&09) &
(0&08)\tabularnewline
Felony &
0&23{*}{*}{*} &
0&25{*}{*}{*} &
0&23{*}{*}{*} &
0&25{*}{*}{*} &
0&25{*}{*}{*} &
0&34{*}{*}{*} &
0&33{*}{*}{*} &
0&34{*}{*}{*} &
0&35{*}{*}{*} &
0&32{*}{*}{*}\tabularnewline
 &
(0&07) &
(0&07) &
(0&07) &
(0&07) &
(0&07) &
(0&09) &
(0&09) &
(0&09) &
(0&10) &
(0&09)\tabularnewline
Life Eligible Crime &
0&02 &
0&04 &
0&00 &
0&03 &
0&06 &
0&09 &
0&06 &
0&05 &
0&08 &
0&12\tabularnewline
 &
(0&07) &
(0&07) &
(0&08) &
(0&09) &
(0&08) &
(0&10) &
(0&10) &
(0&10) &
(0&11) &
(0&10)\tabularnewline
Constant &
0&27{*}{*} &
0&30{*}{*}{*} &
0&24{*}{*} &
0&25{*}{*} &
0&28{*}{*} &
0&12 &
0&13 &
0&11 &
0&08 &
0&13\tabularnewline
 &
(0&11) &
(0&11) &
(0&11) &
(0&12) &
(0&11) &
(0&13) &
(0&13) &
(0&13) &
(0&13) &
(0&13)\tabularnewline
\hline 
Observations &
\multicolumn{2}{c}{491} &
\multicolumn{2}{c}{491} &
\multicolumn{2}{c}{491} &
\multicolumn{2}{c}{434} &
\multicolumn{2}{c|}{491} &
\multicolumn{2}{c}{274} &
\multicolumn{2}{c}{274} &
\multicolumn{2}{c}{274} &
\multicolumn{2}{c}{245} &
\multicolumn{2}{c}{274}\tabularnewline
\hline 
Adj.\ R-squared &
0&09 &
0&11 &
0&10 &
0&11 &
0&12 &
0&11 &
0&12 &
0&10 &
0&12 &
0&14\tabularnewline
Defendant \& Case Ctrls &
\multicolumn{2}{c}{YES} &
\multicolumn{2}{c}{YES} &
\multicolumn{2}{c}{YES} &
\multicolumn{2}{c}{YES} &
\multicolumn{2}{c|}{YES} &
\multicolumn{2}{c}{YES} &
\multicolumn{2}{c}{YES} &
\multicolumn{2}{c}{YES} &
\multicolumn{2}{c}{YES} &
\multicolumn{2}{c}{YES}\tabularnewline
Proportion Guilty &
\multicolumn{2}{c}{YES} &
\multicolumn{2}{c}{NO} &
\multicolumn{2}{c}{NO} &
\multicolumn{2}{c}{NO} &
\multicolumn{2}{c|}{NO} &
\multicolumn{2}{c}{YES} &
\multicolumn{2}{c}{NO} &
\multicolumn{2}{c}{NO} &
\multicolumn{2}{c}{NO} &
\multicolumn{2}{c}{NO}\tabularnewline
Any Guilty Verdict &
\multicolumn{2}{c}{NO} &
\multicolumn{2}{c}{YES} &
\multicolumn{2}{c}{NO} &
\multicolumn{2}{c}{NO} &
\multicolumn{2}{c|}{NO} &
\multicolumn{2}{c}{NO} &
\multicolumn{2}{c}{YES} &
\multicolumn{2}{c}{NO} &
\multicolumn{2}{c}{NO} &
\multicolumn{2}{c}{NO}\tabularnewline
Min Count Verdict &
\multicolumn{2}{c}{NO} &
\multicolumn{2}{c}{NO} &
\multicolumn{2}{c}{YES} &
\multicolumn{2}{c}{NO} &
\multicolumn{2}{c|}{NO} &
\multicolumn{2}{c}{NO} &
\multicolumn{2}{c}{NO} &
\multicolumn{2}{c}{YES} &
\multicolumn{2}{c}{NO} &
\multicolumn{2}{c}{NO}\tabularnewline
No Mixed Verdicts &
\multicolumn{2}{c}{NO} &
\multicolumn{2}{c}{NO} &
\multicolumn{2}{c}{NO} &
\multicolumn{2}{c}{YES} &
\multicolumn{2}{c|}{NO} &
\multicolumn{2}{c}{NO} &
\multicolumn{2}{c}{NO} &
\multicolumn{2}{c}{NO} &
\multicolumn{2}{c}{YES} &
\multicolumn{2}{c}{NO}\tabularnewline
Guilty if Mixed + Jail &
\multicolumn{2}{c}{NO} &
\multicolumn{2}{c}{NO} &
\multicolumn{2}{c}{NO} &
\multicolumn{2}{c}{NO} &
\multicolumn{2}{c|}{YES} &
\multicolumn{2}{c}{NO} &
\multicolumn{2}{c}{NO} &
\multicolumn{2}{c}{NO} &
\multicolumn{2}{c}{NO} &
\multicolumn{2}{c}{YES}\tabularnewline
\hline 
\end{tabular}

\begin{tablenotes} 
\item\emph{Notes}: This table presents alternate specifications for OLS regressions of defendant peremptory strike exhaustion on guilt. All columns are limited to cases in which a jury issued at least one guilty or not guilty verdict; columns (1)-(5) are for all defendants and columns (6)-(10) are limited to Black defendants. Defendant \& Case Controls are as described in Table 4, except dummy variables for whether a defendant is Black or Hispanic are only used in cols. (1)--(5). In columns (1) and (6), the dependent variable is the proportion of charges tried to the jury on which the defendant is found guilty by the jury. In columns (2) and (7), Def. Exhausts Strikes = 1 if the jury issued a guilty verdict on at least count, regardless of what it does on the other counts. In columns (3) and (8), a case is categorized as either guilty or not guilty depending on the verdict the jury issued on the lowest count tried to the jury. In columns (4) and (9), cases with a mixed guilty/not guilty verdict are excluded. In columns (5) and (10), cases with a mixed verdict are categorized as guilty if the defendant received a jail sentence (net of credit for time served); otherwise they are categorized as not guilty. Heteroskedastic robust standard errors are reported in parentheses. *** = significant at 1\% level, ** = significant at 5\% level, * = significant at 10\% level. 
\end{tablenotes} 
\end{threeparttable}
}
\end{sideways}
\end{table}

\newpage{}
\begin{table}
[ph!]
\center\resizebox{0.90\textwidth}{!}{%
\begin{threeparttable}\caption{Effect of Placebo Defendant Strike Exhaustion\label{tab:Defendant-Placebo-Strike-Exhaustion}}

\begin{tabular}{lr@{\extracolsep{0pt}.}lr@{\extracolsep{0pt}.}lr@{\extracolsep{0pt}.}l|r@{\extracolsep{0pt}.}lr@{\extracolsep{0pt}.}lr@{\extracolsep{0pt}.}l}
\hline 
 &
\multicolumn{12}{c}{}\tabularnewline
 &
\multicolumn{6}{c|}{All Defendants} &
\multicolumn{6}{c}{Black Defendants}\tabularnewline
\cline{2-13}
 &
\multicolumn{2}{c}{Guilty} &
\multicolumn{2}{c}{Guilty} &
\multicolumn{2}{c|}{Guilty} &
\multicolumn{2}{c}{Guilty} &
\multicolumn{2}{c}{Guilty} &
\multicolumn{2}{c}{Guilty}\tabularnewline
 &
\multicolumn{2}{c}{(1)} &
\multicolumn{2}{c}{(2)} &
\multicolumn{2}{c|}{(3)} &
\multicolumn{2}{c}{(4)} &
\multicolumn{2}{c}{(5)} &
\multicolumn{2}{c}{(6)}\tabularnewline
\hline 
\hline 
Placebo Exhausts &
-0&07 &
-0&05 &
-0&05 &
-0&09 &
-0&07 &
-0&09\tabularnewline
 &
(0&06) &
(0&06) &
(0&06) &
(0&08) &
(0&08) &
(0&08)\tabularnewline
n-1/n-2 Group &
0&01 &
-0&01 &
-0&00 &
-0&03 &
-0&05 &
-0&04\tabularnewline
 &
(0&05) &
(0&05) &
(0&05) &
(0&07) &
(0&07) &
(0&08)\tabularnewline
Felony &
0&26{*}{*}{*} &
0&26{*}{*}{*} &
0&19{*}{*}{*} &
0&30{*}{*}{*} &
0&29{*}{*}{*} &
0&25{*}{*}{*}\tabularnewline
 &
(0&04) &
(0&04) &
(0&06) &
(0&05) &
(0&06) &
(0&08)\tabularnewline
Life Eligible Crime &
0&01 &
0&01 &
-0&07 &
-0&01 &
-0&00 &
-0&03\tabularnewline
 &
(0&05) &
(0&05) &
(0&06) &
(0&06) &
(0&07) &
(0&08)\tabularnewline
Constant &
0&37{*}{*}{*} &
0&40{*}{*}{*} &
0&34{*}{*}{*} &
0&34{*}{*}{*} &
0&27{*}{*}{*} &
0&19{*}\tabularnewline
 &
(0&03) &
(0&08) &
(0&09) &
(0&05) &
(0&10) &
(0&11)\tabularnewline
\hline 
Observations &
\multicolumn{2}{c}{567} &
\multicolumn{2}{c}{559} &
\multicolumn{2}{c|}{559} &
\multicolumn{2}{c}{310} &
\multicolumn{2}{c}{306} &
\multicolumn{2}{c}{306}\tabularnewline
\hline 
Adj.\ R-squared &
0&09 &
0&09 &
0&09 &
0&11 &
0&11 &
0&11\tabularnewline
Other Case Controls &
\multicolumn{2}{c}{NO} &
\multicolumn{2}{c}{NO} &
\multicolumn{2}{c|}{YES} &
\multicolumn{2}{c}{NO} &
\multicolumn{2}{c}{NO} &
\multicolumn{2}{c}{YES}\tabularnewline
\hline 
\end{tabular}

\begin{tablenotes} 
\item\emph{Notes}: This table presents OLS regressions of a placebo test of defendant peremptory strike exhaustion on guilt. The dependent variable in all specifications is the proportion of charges pending as of the date of jury selection on which a defendant is eventually convicted. The primary coefficient of interest is Placebo Exhausts, which = 1 if the defendant used one less strike than the limit ($n-1$ strikes), and = 0 otherwise. n-1/n-2 Group = 1 if the defendant used either $n-1$ or $n-2$ strikes and = 0 otherwise. Columns (1)--(3) measure placebo exhaustion of peremptory strikes for all defendants; columns (4)--(6) measure placebo exhaustion of strikes by just Black defendants. Defendant \& Case Controls are as described in Table 4, except dummy variables for whether a defendant is Black or Hispanic are only used in cols. (2)--(3). Heteroskedastic robust standard errors are reported in parentheses. *** = significant at 1\% level, ** = significant at 5\% level, * = significant at 10\% level.
\end{tablenotes} 
\end{threeparttable}
}
\end{table}
\newpage{}

\appendix

\section*{Appendix}

\setcounter{figure}{0} \renewcommand{\thefigure}{A.\arabic{figure}}

\setcounter{table}{0} \renewcommand{\thetable}{A.\arabic{table}}

\setcounter{footnote}{0}

\section*{Model}

We now formalize the intuition for our identification strategy and
propose an extended framework in which to discuss assumptions.  To
begin, we parameterize the facts of a case $t$ by a fact index $F^{t}$
taking values in $\left[0,1\right]$, where 0 represents the weakest
case conditions for convicting the defendant, and 1 represents the
strongest.\footnote{In reality, $F^{t}$ would be multi-dimensional, capturing information
such as the evidence against a defendant, including testimonies, as
well as other factors that affect the final verdict, such as the judge.
We simplify to a one-dimensional $F^{t}$ for tractability and because
attorneys also likely act on a summary measure when deciding whether
to strike potential jurors.} We define a \emph{juror predisposition function} (JPF) as a function
$j\colon\left[0,1\right]\rightarrow\left[0,1\right]$ that takes as
input the fact index and gives as output the juror's predisposition
against a defendant with the corresponding fact index.\footnote{Imagine on one extreme the JPF $\underline{j}\left(F^{t}\right)=0$,
which represents a juror who would support a ``not guilty'' verdict
no matter the facts; and on the other extreme the JPF $\overline{j}\left(F^{t}\right)=1$,
which represents a juror who would support a ``guilty'' verdict
no matter the facts. An impartial juror would have an identity JPF:
$I\left(F^{t}\right)=F^{t}$.} We define a \emph{jury pool} of size $P$ for trial $t$ to be a
sequence $j_{1}^{t},\ldots j_{P}^{t}$ of JPFs, and a \emph{seated
jury} of size $S$ to be a subsequence $\left\{ j_{a_{k}}\right\} _{k=1}^{S}$
of length $S$ of the jury pool.

When deciding whether to use a peremptory strike, an attorney must
compare the seated jury at hand with the replacement candidate (the
juror next in line to be considered if a strike is used). An important
feature of our Florida data is that an attorney has the same level
of information on all potential jurors when he makes his strike decision,
including their observable characteristics and responses to questions
they are asked during voir dire.\footnote{This is not the case in some other jurisdictions. For example, in
Middlesex County, New Jersey, attorneys can observe the pool, but
the next juror is revealed and questioned only when it is apparent
that a new juror is needed for the jury box (e.g., a juror in the
box has been excused because of a peremptory strike). In this scenario,
an attorney could make his strike decision solely based on the distribution
of observable characteristics of the pool.} 

After trial, the $S$ jurors render their decision through a unanimous
verdict.\footnote{For simplification, we ignore the possibility of a hung jury in our
model. These comprise a small percentage of cases in our dataset.} We can thus aggregate their individual JPFs into a single verdict
function:
\[
G\left(j_{a_{1}},\ldots,j_{a_{S}};F\right)
\]
$G(\centerdot)$ inputs the individual JPFs of the seated jury, as
well as the facts of the case, and outputs 1 (guilty) or 0 (not guilty).\footnote{Sentencing is decided by the judge, not by the jury. The domain of
$G$ could be generalized to $\left\{ 0,1\right\} ^{k}$, where $k$
is the number of counts the defendant is charged with. For simplicity,
here and in the empirical section, we focus on 0/1 measures of guilty.} Instead of this function, for a specific trial $t$ it is sufficient
for both parties to consider a simpler function $G^{t}\colon\mathbb{R}^{S}\rightarrow\left\{ 0,1\right\} $,
defined as:
\[
G^{t}\left(j_{a_{1}}(F^{t}),\ldots,j_{a_{S}}\left(F^{t}\right)\right).
\]
We assume $G^{t}$ is strictly increasing in all directions.

We can now model the simplest scenario: whether an attorney should
exercise his last peremptory strike after his opponent has already
exhausted all of her strikes. Without loss of generality, suppose
$j_{a_{1}}(F^{t})\le j_{a_{2}}(F^{t})\le\ldots\le j_{a_{S}}(F^{t})$,
and let $r$ be the index number of the replacement juror, which is
non-stochastic because the replacement juror is known. For concreteness,
we focus on the scenario in which the defense attorney has only one
strike remaining. She will use this strike if:
\[
G^{t}\left(j_{a_{1}}(F^{t}),\ldots,j_{a_{S}}\left(F^{t}\right)\right)>G\left(j_{a_{1}}(F^{t}),\ldots,j_{a_{S-1}}\left(F^{t}\right),j_{r}\left(F^{t}\right)\right).
\]
Since $G$ is strictly increasing, an equivalent condition is the
following:
\[
j_{a_{S}}\left(F^{t}\right)>j_{r}\left(F^{t}\right)
\]
We will provide a sufficient condition for the strike decision to
be independent of the facts of the case, $F^{t}$, after the following
definition:
\begin{defn}
We say the jury pool $\left\{ j_{k}\right\} _{k=1}^{P}$ is\emph{
$\boldsymbol{F}$-separated} if the ordering of the JPFs $\left\{ j_{k}\right\} _{k=1}^{P}$
is constant on the set \emph{$\boldsymbol{F}$}: for all $F^{t},F^{t'}\in\boldsymbol{F}$,
$F^{t'}>F^{t}$, $1\le l\le P$, $1\le m\le P$, if $j_{l}\left(F^{t}\right)>j_{m}\left(F^{t}\right)$,
then $j_{l}\left(F^{t'}\right)>j_{m}\left(F^{t}\right).$\footnote{If \emph{$\boldsymbol{F}$} is an interval, the definition is equivalent
to the JPFs not intersecting on \emph{$\boldsymbol{F}$}.}
\end{defn}
\begin{thm}
\label{thm:fact-indep}If a jury pool is \emph{$\boldsymbol{F}$}-separated,
then the decision to strike does not depend on $F^{t}\in\boldsymbol{F}$.
\end{thm}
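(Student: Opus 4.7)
The plan is to reduce the defense attorney's strike decision to a single pairwise JPF comparison, and then to invoke $\boldsymbol{F}$-separation to show that this comparison is invariant on $\boldsymbol{F}$. The theorem is essentially a direct corollary of the definition once the reduction is made, so most of the work is in setting up the reduction carefully and handling the tie case.

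First, I would lift the derivation already given immediately before the theorem statement: the attorney strikes if and only if $G^t(j_{a_1}(F^t),\ldots,j_{a_S}(F^t)) > G^t(j_{a_1}(F^t),\ldots,j_{a_{S-1}}(F^t),j_r(F^t))$, and because $G^t$ is strictly increasing in every argument (and because she has only one strike left, so the only candidate she can remove is the most anti-defense seated juror $j_{a_S}$), this inequality is equivalent to the scalar comparison $j_{a_S}(F^t) > j_r(F^t)$. I would note that the maximum $j_{a_S}$ itself depends only on the ordering of $\{j_{a_k}(F^t)\}_{k=1}^S$, not on the numerical values, so $\boldsymbol{F}$-separation applied to the seated indices already pins down which index $a_S$ is, uniformly over $F^t \in \boldsymbol{F}$.

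Second, I would apply the $\boldsymbol{F}$-separation hypothesis with $l = a_S$ and $m = r$. If $j_{a_S}(F^t) > j_r(F^t)$ holds at some $F^t \in \boldsymbol{F}$, then by the definition the strict inequality persists at every other $F^{t'} \in \boldsymbol{F}$; by symmetry (swapping $l$ and $m$), the reverse strict inequality $j_{a_S}(F^t) < j_r(F^t)$ is likewise preserved on $\boldsymbol{F}$. Combining these two statements also forces the tie case $j_{a_S}(F^t) = j_r(F^t)$ to persist: if it held at some $F^t$ but one of the strict inequalities held at $F^{t'}$, propagating that strict inequality back to $F^t$ would contradict the tie. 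So in all three mutually exclusive cases, the comparison $j_{a_S}(\cdot)$ versus $j_r(\cdot)$ is constant on $\boldsymbol{F}$, and hence so is the strike decision.

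The main obstacle, such as it is, is really bookkeeping: being precise about the tie case (since the stated $\boldsymbol{F}$-separation definition only constrains strict inequalities), and making clear that the identity of the index $a_S$ used in the reduction does not itself drift with $F^t$. Both are resolved by applying the separation condition in both directions on the pair $(l,m) = (a_S, r)$ and on all pairs within the seated subsequence. No deeper argument is required, and no new assumption beyond strict monotonicity of $G^t$ and $\boldsymbol{F}$-separation of the pool is invoked.
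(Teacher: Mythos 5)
Your proposal is correct and takes essentially the same route as the paper's proof: both reduce the last-strike decision, via strict monotonicity of $G^{t}$, to the scalar comparison $j_{a_{S}}\left(F^{t}\right)>j_{r}\left(F^{t}\right)$ and then invoke $\boldsymbol{F}$-separation to conclude that this comparison, and hence the strike decision, is constant on $\boldsymbol{F}$. Your extra bookkeeping (the invariance of the index $a_{S}$ over $\boldsymbol{F}$ and the tie case) only makes explicit what the paper's terser proof leaves implicit.
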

\begin{proof}
Suppose that a jury pool, $\left\{ j_{k}\right\} _{k=1}^{P}$, is
$\boldsymbol{F}$-separated, and has the seated jury $\left\{ j_{a_{k}}\right\} _{k=1}^{S}$
at the time the defense is deciding whether to use the last strike.
Fix $F^{t},F^{t'}\in\boldsymbol{F}$. Without loss of generality,
suppose that $F^{t}>F^{t'}$ and that $j_{a_{1}}(F^{t})\le j_{a_{2}}(F^{t})\le\ldots\le j_{a_{S}}(F^{t})$.
Let $r$ be the index number of the replacement juror. The defense
will use the strike if $j_{a_{S}}\left(F^{t}\right)>j_{r}\left(F^{t}\right)$,
which is equivalent to $j_{a_{S}}\left(F^{t'}\right)>j_{r}\left(F^{t'}\right)$
because the jury pool is $\boldsymbol{F}$-separated. Hence, the strike
is used (or not used) regardless of the value of the fact index.
\end{proof}
\begin{assumption}
\label{assu:before_n-1}The composition of the seated jury at the
time of the last strike decision of the defense does not depend on
the fact index.
\end{assumption}
Assumption (\ref{assu:before_n-1}) allows us to focus on solving
the end of the peremptory challenge process.\footnote{This assumption could be relaxed by modeling the entire sequence of
challenges. However, this simple framework and Theorem~\ref{thm:fact-indep}
capture the main intuition of the setting.} It follows from Theorem~\ref{thm:fact-indep} and Assumption~(\ref{assu:before_n-1})
that the composition of the final seated jury is independent of the
fact index. This independence has important implications. First,
there should be no systematic differences in $F^{t}$ (or anything
outside of $F^{t}$) between $n$ strike cases and $n-1$ strike cases,
since the decision whether to strike does not depend on the facts
of the case. Second, we can allow for asymmetric information or talent:
Even if some attorneys are more skilled or have more information to
identify $F^{t}$ than other attorneys, this variation does not lead
to a difference in the final seated jury. For this result to hold,
we must assume one of two things. Either all attorneys must correctly
identify $\left\{ j_{k}\right\} _{k=1}^{P}$\textemdash that is, during
voir dire they identify the JPF for all jurors for any set of facts,
even though they might not know $F^{t}$. Alternatively, we can assume
attorneys only correctly identify $F^{t}$ and $\left\{ j_{k}(F^{t})\right\} _{k=1}^{P}$.
That is, instead of assuming correct identification of all JPFs at
all points, it is sufficient to assume correct identification of the
fact index in that trial and the JPFs evaluated at that specific fact
index.

We can now aggregate the above decision process to show what we identify
with the difference in averages for the $n$ strike and $n-1$ strike
cases. Suppose there are $N$ cases where all $n$ strikes were used,
and $N_{-1}$ cases where $n-1$ strikes were used. When we compare
the average conviction rates for the subset of $n$ cases versus the
average for the subset of $n-1$ cases, we get:
\[
\hat{\gamma}=\frac{1}{N}\sum_{i=1}^{N}G\left(J_{b_{j_{1}}},\ldots,J_{b_{j_{k}}};F\right)-\frac{1}{N_{-1}}\sum_{i=1}^{N_{-1}}G\left(J_{r},\ldots,J_{b_{j_{k}}};F\right)
\]
Viewing these cases as an i.i.d.\ sample, the Law of Large Numbers
implies that:
\[
\hat{\gamma}\overset{p}{\rightarrow}\mathrm{E}\left\{ G\left(J_{b_{j_{1}}},\ldots,J_{b_{j_{k}}};F\right)|J_{b_{j_{1}}}>J_{r}\right\} -\mathrm{E}\left\{ G\left(J_{r},\ldots,J_{b_{j_{k}}};F\right)|J_{b_{j_{1}}}\le J_{r}\right\} .
\]

\section*{\newpage Appendix Figures}

\begin{figure}
[ph!]
\caption{For Cause Strikes: Judge Initiated\label{fig:For Cause-Strikes:-Judge}}

\noindent\noindent\resizebox{.99\textwidth}{!}{%

\includegraphics{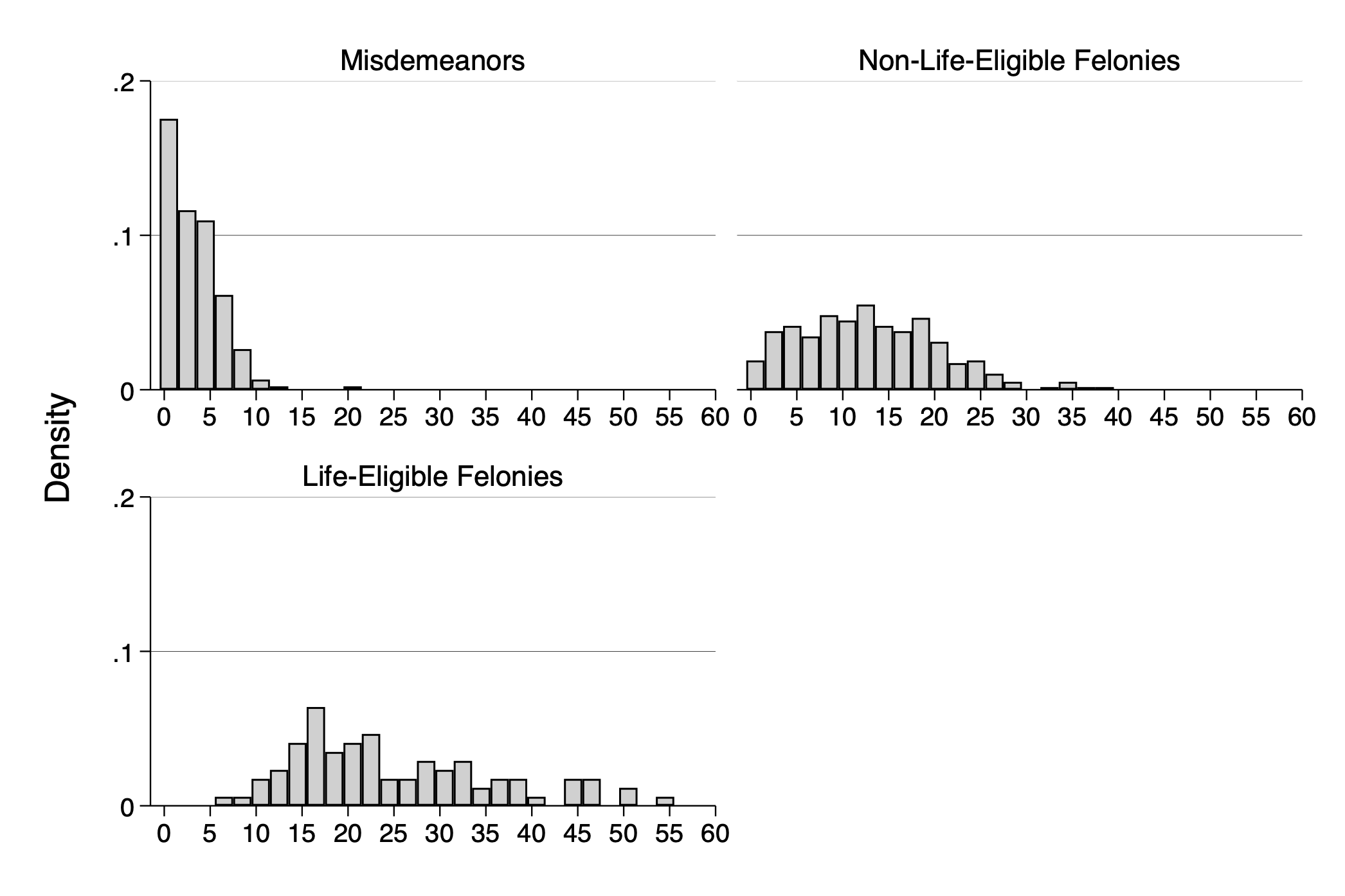}

}

\begin{threeparttable}
\begin{tablenotes} 
\item\emph{Notes}: This figure shows a histogram for the number of for cause strikes initiated by the judge in the 604 trials from 2015 through September 2017 in our full dataset (includes cases in which parties exceeded their peremptory strike limits). Bin size = 2.
\end{tablenotes} 
\end{threeparttable}
\end{figure}

\begin{figure}
[ph!]
\caption{For Cause Strikes: Defense Initiated\label{fig:For Cause-Strikes:-Defense}}

\noindent\noindent\resizebox{.99\textwidth}{!}{%

\includegraphics{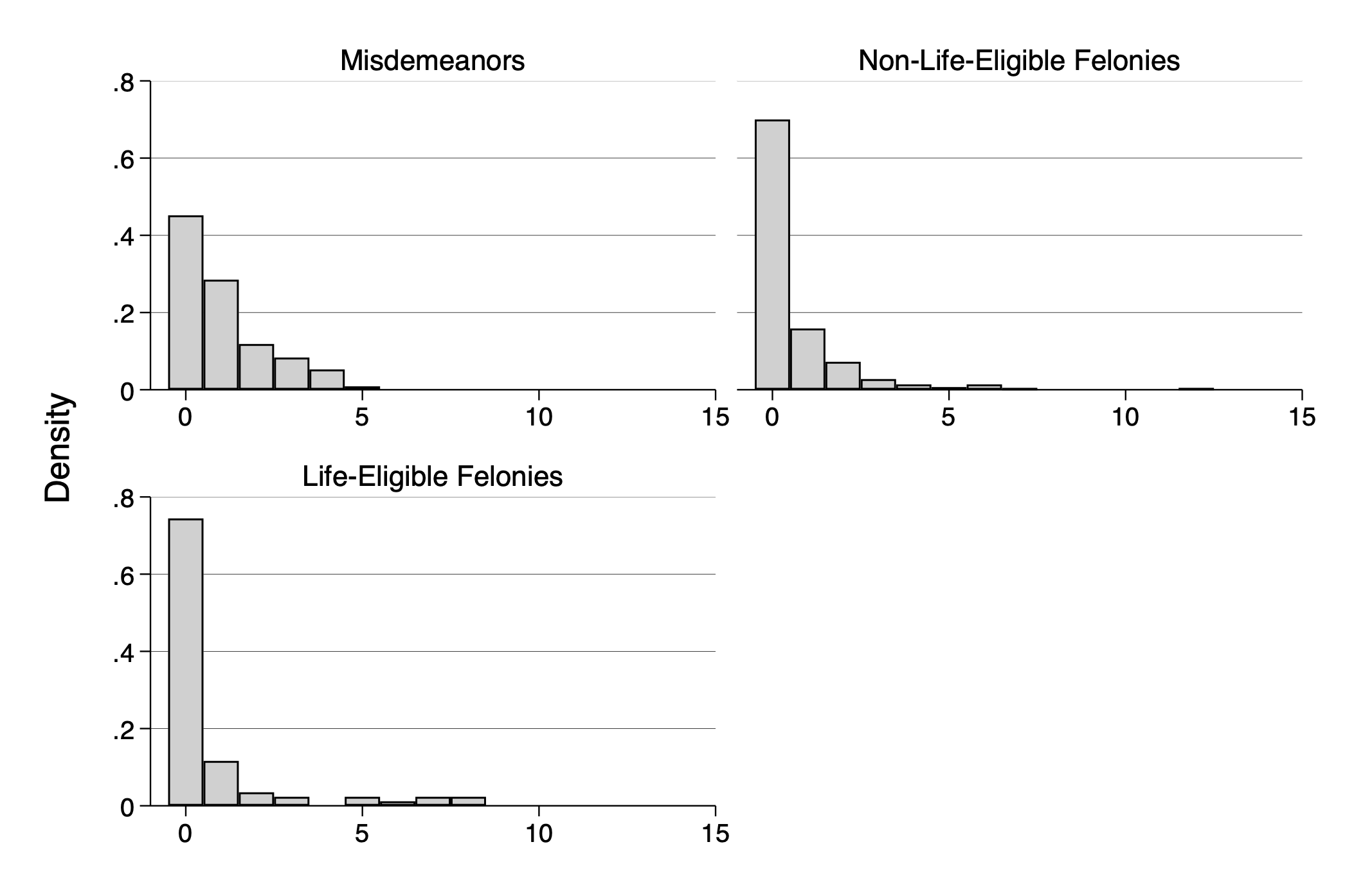}

}

\begin{threeparttable}
\begin{tablenotes} 
\item\emph{Notes}: This figure shows a histogram for the number of for cause strikes successfully requested by the defense in the 604 trials from 2015 through September 2017 in our full dataset (includes cases in which parties exceeded their peremptory strike limits).
\end{tablenotes} 
\end{threeparttable}
\end{figure}

\begin{figure}
[ph!]
\caption{For Cause Strikes: Prosecution Initiated\label{fig:For Cause-Strikes:-Prosecution-1}}

\noindent\noindent\resizebox{.99\textwidth}{!}{%

\includegraphics{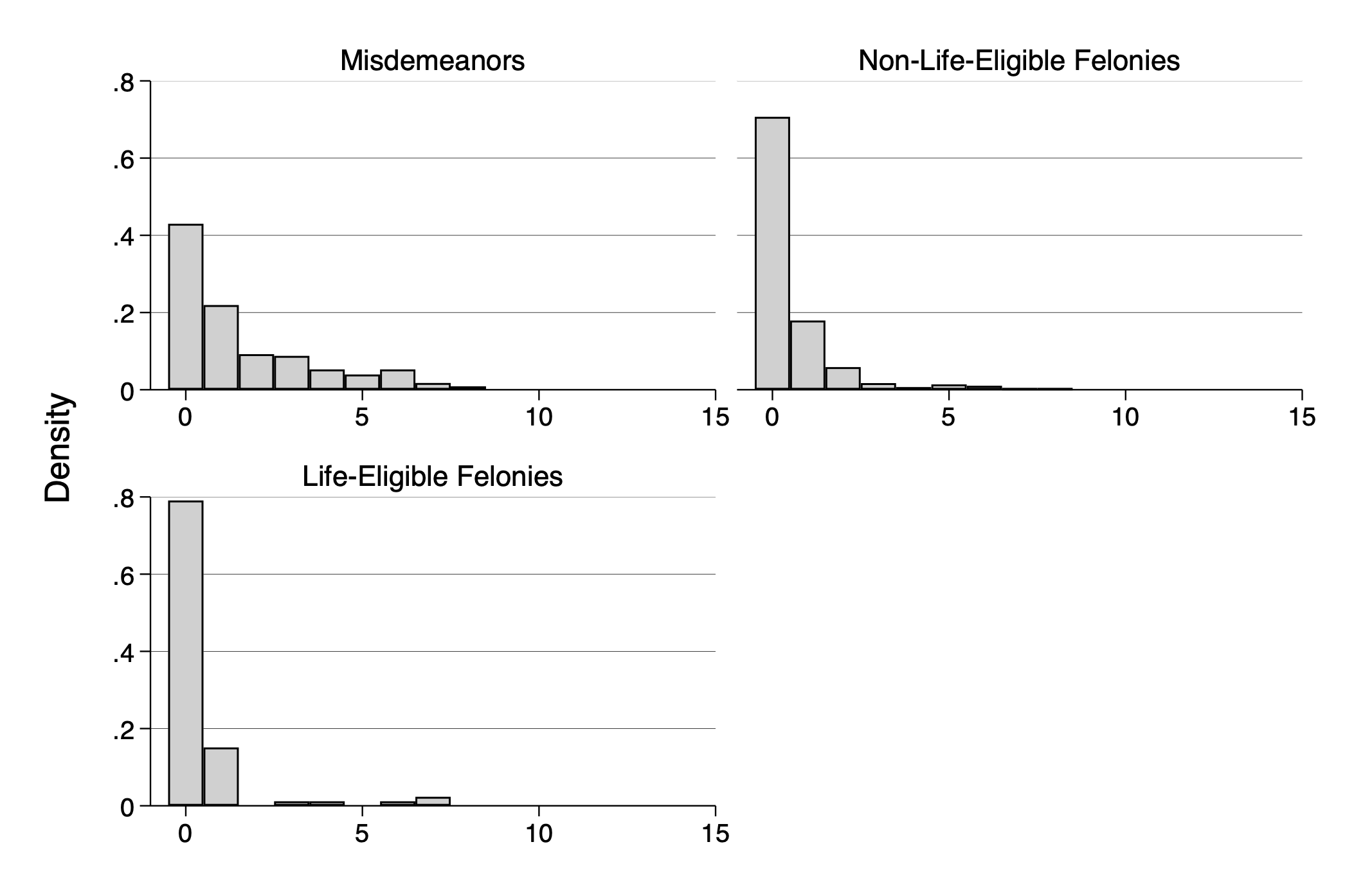}

}

\begin{threeparttable}
\begin{tablenotes} 
\item\emph{Notes}: This figure shows a histogram for the number of for cause strikes successfully requested by the prosecution in the 604 trials from 2015 through September 2017 in our full dataset (includes cases in which parties exceeded their peremptory strike limits).
\end{tablenotes} 
\end{threeparttable}
\end{figure}

\section*{\newpage Appendix Tables}

\vspace{-0.5cm}
\begin{table}
[ph!]
\centering
\resizebox{.90\textwidth}{!}{%
\begin{threeparttable}\caption{Jury Pool Characteristics v. Baseline Characteristics\label{fig:Randomization check}}

\begin{tabular}{lrrrrrr}
\hline 
\multicolumn{5}{l}{} &
 &
\tabularnewline
 &
\multicolumn{4}{c}{Jury Pool Characteristics} &
 &
\tabularnewline
\cline{2-7}
 &
Prop. Black &
Prop. Female &
Avg. Age &
Med. Income &
Prop. Dem. &
Prop. Rep.\tabularnewline
Baseline Characteristics &
(1) &
(2) &
(3) &
(4) &
(5) &
(6)\tabularnewline
\hline 
\hline 
\uline{Defendant characteristics} &
 &
 &
 &
 &
 &
\tabularnewline
Age &
-0.00 &
0.00 &
-0.01 &
3.31 &
-0.00 &
-0.00\tabularnewline
 &
(0.00) &
(0.00) &
(0.01) &
(20.39) &
(0.00) &
(0.00)\tabularnewline
Black &
-0.00 &
0.02 &
-0.59{*} &
-497.89 &
-0.01 &
-0.02{*}{*}\tabularnewline
 &
(0.01) &
(0.01) &
(0.34) &
(535.79) &
(0.01) &
(0.01)\tabularnewline
Hispanic &
-0.00 &
0.01 &
-0.37 &
-427.62 &
-0.01 &
-0.00\tabularnewline
 &
(0.01) &
(0.01) &
(0.44) &
(721.95) &
(0.01) &
(0.01)\tabularnewline
Female &
-0.00 &
0.02 &
-0.98{*}{*} &
-448.22 &
0.01 &
-0.01\tabularnewline
 &
(0.01) &
(0.01) &
(0.43) &
(723.60) &
(0.01) &
(0.01)\tabularnewline
\# Prev. Imprisonments &
-0.00 &
-0.01{*}{*} &
0.16 &
-164.98 &
-0.00 &
0.00\tabularnewline
 &
(0.00) &
(0.00) &
(0.14) &
(249.86) &
(0.00) &
(0.00)\tabularnewline
Years Prev. Imprison. &
0.00 &
-0.00 &
-0.07 &
-22.84 &
-0.00 &
-0.00\tabularnewline
 &
(0.00) &
(0.00) &
(0.06) &
(81.06) &
(0.00) &
(0.00)\tabularnewline
\uline{Attorney characteristics} &
 &
 &
 &
 &
 &
\tabularnewline
Defense Experience &
0.00 &
0.00 &
0.02 &
-50.35{*} &
0.00 &
0.00\tabularnewline
 &
(0.00) &
(0.00) &
(0.02) &
(27.05) &
(0.00) &
(0.00)\tabularnewline
Prosecutor Experience &
-0.00 &
0.00{*}{*} &
-0.05{*}{*} &
-22.22 &
-0.00 &
0.00\tabularnewline
 &
(0.00) &
(0.00) &
(0.02) &
(43.51) &
(0.00) &
(0.00)\tabularnewline
Def. Law Sch. Rank &
-0.00 &
0.00 &
-0.01{*}{*} &
-2.31 &
0.00 &
-0.00\tabularnewline
 &
(0.00) &
(0.00) &
(0.00) &
(3.94) &
(0.00) &
(0.00)\tabularnewline
Pr. Law Sch. Rank &
-0.00 &
-0.00 &
-0.00 &
1.95 &
-0.00 &
0.00\tabularnewline
 &
(0.00) &
(0.00) &
(0.00) &
(4.06) &
(0.00) &
(0.00)\tabularnewline
PD &
-0.00 &
-0.00 &
0.10 &
-278.13 &
0.01 &
0.00\tabularnewline
 &
(0.01) &
(0.01) &
(0.40) &
(625.52) &
(0.01) &
(0.01)\tabularnewline
\uline{Case characteristics} &
 &
 &
 &
 &
 &
\tabularnewline
Homicide &
-0.01 &
-0.00 &
0.07 &
1,690.41 &
-0.00 &
0.00\tabularnewline
 &
(0.01) &
(0.02) &
(0.55) &
(1,161.04) &
(0.02) &
(0.02)\tabularnewline
Other Violent Offense &
-0.00 &
-0.00 &
-0.36 &
99.21 &
0.02{*} &
-0.01\tabularnewline
 &
(0.01) &
(0.01) &
(0.42) &
(647.36) &
(0.01) &
(0.01)\tabularnewline
Property Offense &
-0.01 &
0.02{*} &
-0.19 &
745.09 &
0.02{*}{*} &
-0.01\tabularnewline
 &
(0.01) &
(0.01) &
(0.45) &
(685.94) &
(0.01) &
(0.01)\tabularnewline
Drug Offense &
0.00 &
0.04{*}{*} &
-0.67 &
1,044.59 &
0.03{*} &
-0.02\tabularnewline
 &
(0.01) &
(0.02) &
(0.60) &
(1,000.35) &
(0.01) &
(0.01)\tabularnewline
Sex Offense &
0.01 &
0.02 &
-0.77 &
-306.12 &
0.02{*} &
-0.03{*}{*}\tabularnewline
 &
(0.01) &
(0.02) &
(0.52) &
(996.50) &
(0.01) &
(0.01)\tabularnewline
Weapons Offense &
-0.01 &
0.02 &
-0.17 &
177.79 &
0.03{*} &
0.02\tabularnewline
 &
(0.01) &
(0.02) &
(0.74) &
(1,268.88) &
(0.02) &
(0.02)\tabularnewline
Counts Charged &
-0.00 &
-0.00 &
0.01 &
109.51 &
0.00 &
-0.00\tabularnewline
 &
(0.00) &
(0.00) &
(0.05) &
(80.80) &
(0.00) &
(0.00)\tabularnewline
\hline 
Observations &
527 &
527 &
527 &
527 &
527 &
527\tabularnewline
\hline 
Adj. R-squared &
-0.02 &
0.02 &
0.01 &
-0.01 &
0.00 &
0.01\tabularnewline
F-Statistic &
0.49 &
2.15 &
1.39 &
0.87 &
1.10 &
1.45\tabularnewline
\hline 
\end{tabular}

\begin{tablenotes} 
\item\emph{Notes}: This table shows OLS regression estimates (with heteroskedastic robust standard errors) of various baseline defendant, attorney, and case characteristics on jury pool characteristics as listed at the top of each column. Pool income is the median across jurors within a pool, with each juror's income estimated by the median income in the zip code in which she resides (using U.S. Census data and 2017 inflation-adjusted dollars). "Other Offense" is an omitted category for case characteristics. All variables are as described in previous tables and the text. The F-statistic jointly tests whether all coefficients are = 0 in a given regression. *** = significant at 1\% level, ** = significant at 5\% level, * = significant at 10\% level.
\end{tablenotes} 
\end{threeparttable}
}
\end{table}

\newpage{}

\bibliographystyle{plainnat}
\bibliography{paper}

\end{document}